\newcites{appendix}{References}
\def\ourbm{\textsc{SMaSH}\xspace}
\def\rescue{\textsc{Rescue}\xspace}
\def\rescuega{\textsc{Rescue-GA}\xspace}
\def\ourwebsite{\url{smash.cs.berkeley.edu}}
\def\ourrepo{\url{github.com/amplab/smash}}
\newtheorem{prop}{Proposition}
\newtheorem*{prop*}{Proposition}
\def\firstcircle{(0,0) circle (1.5cm)}
\def\secondcircle{(60:2cm) circle (1.5cm)}
\def\thirdcircle{(0:2cm) circle (1.5cm)}
\newcommand{\whiteintersection}{
    \begin{scope}
      \clip \firstcircle;
      \clip \secondcircle;
      \fill[white] \thirdcircle;
    \end{scope}
}
\newcommand{\comment}[1]{}
\newcommand{\TP}{\mathrm{TP}}
\newcommand{\TN}{\mathrm{TN}}
\newcommand{\FP}{\mathrm{FP}}
\newcommand{\FN}{\mathrm{FN}}
\newcommand{\present}{\mathrm{present}}
\newcommand{\absent}{\mathrm{absent}}
\newcommand{\precision}{\mathrm{precision}}
\newcommand{\recall}{\mathrm{recall}}
\newcommand{\tooslow}{$>400h$}
\newcommand{\toomuch}{$>\$1000$}
\newcommand{\markup}[1]{\textcolor{black}{#1}}
\begin{document}

\firstpage{1}

\title[\ourbm]{\ourbm: A Benchmarking Toolkit for \markup{Human Genome} Variant Calling}
\author[Talwalkar \textit{et~al}]{Ameet Talwalkar\,$^{1,\dagger, *}$, Jesse
Liptrap\,$^{1,\dagger}$, Julie Newcomb\,$^{1}$,
Christopher Hartl\,$^{1,3}$, Jonathan
Terhorst\,$^{2}$, Kristal Curtis\,$^{1}$, Ma'ayan Bresler\,$^{1}$, Yun
S. Song\,$^{1,2}$, Michael I. Jordan\,$^{1,2}$,  and David
Patterson\,$^{1,*}$}
\address{$^{1}$Department of Electrical Engineering and Computer Science,
UC Berkeley, Berkeley, CA. \\ 
$^{2}$Department of Statistics, UC Berkeley, Berkeley, CA. \\
$^{3}$The Broad Institute of Harvard and MIT, Cambridge, MA. \\
$^{\dagger}$These authors contributed equally.\\
$^{*}$To whom correspondence should be addressed.
}

\ifdefined\bioformat
  \history{Received on XXXXX; revised on XXXXX; accepted on XXXXX}
  \editor{Associate Editor: XXXXXXX}
\else
  \history{}
  \editor{}
\fi

\maketitle
\newcommand{\absmotivation}{\section{Motivation:}}
\newcommand{\absresults}{\section{Results:}}
\newcommand{\absavailable}{\section{Availability:}}
\newcommand{\abscontact}{\section{Contact:} \href{}{ameet@cs.berkeley.edu, pattrsn@cs.berkeley.edu}}

\begin{abstract}

\absmotivation 
Computational methods are essential to extract actionable information from raw
sequencing data, and to thus fulfill the promise of next-generation sequencing
technology. Unfortunately, computational tools developed to call variants from
human sequencing data
disagree on many of their predictions, and current methods to evaluate
accuracy and computational performance are ad-hoc and incomplete. Agreement on
benchmarking variant calling methods would stimulate development of genomic
processing tools and facilitate communication among researchers.

\absresults We propose \ourbm, a benchmarking methodology for evaluating \markup{human
genome} variant calling algorithms. We generate synthetic datasets, organize and
interpret a wide range of existing benchmarking data for real genomes, and
propose a set of accuracy and computational performance metrics for evaluating
variant calling methods on this benchmarking data.  Moreover, we illustrate the
utility of \ourbm to evaluate the performance of some leading single nucleotide
polymorphism (SNP), indel, and structural variant calling algorithms. 

\absavailable We provide free and open access online to the \ourbm toolkit,
along with detailed documentation, at \ourwebsite.  


\abscontact
\end{abstract}

\section{Introduction}
Next-generation sequencing is revolutionizing biological and clinical research.
Long hampered by the difficulty and expense of obtaining genomic data, life
scientists now face the opposite problem: faster, cheaper technologies are
beginning to generate massive amounts of new sequencing data that are
overwhelming our technological capacity to conduct genomic analyses
\citep{Mardis10}.  Computational processing will soon become the
bottleneck in genome sequencing research, and as a result, computational
biologists are actively developing new tools to more efficiently and accurately
process human genomes and call variants, e.g., SAMTools \citep{Li09}, GATK
\citep{DePristo11}, Platypus \citep{Platypus}, 
BreakDancer \citep{Chen09}, Pindel \citep{Ye09}, Dindel \citep{Albers11}, and so on. 

Unfortunately, SNP callers disagree as much as
20\% of the time \citep{Lyon12} and there is even less consensus in the outputs
of structural variant algorithms \citep{Alkan11}. Moreover, reproducibility,
interpretability, and ease of setup and use of existing software are pressing
issues currently hindering clinical adoption~\citep{Nekrutenko12}.  Indeed,
reliable benchmarks are required in order to measure accuracy, computational
performance, and software robustness, and thereby improve them.

In an ideal world, benchmarking data to evaluate variant calling algorithms
would consist of several fully sequenced, perfectly-known human genomes.  However,
ideal validation data do not exist in practice.  Technical limitations, such
as the difficulty in accurately sequencing low-complexity regions, along with
budget constraints, such as the cost to generate high-coverage Sanger reads,
limit the quality and scope of validation data.  Nonetheless, significant
resources have already been devoted to generate subsets of benchmarking data
that are substantial enough to drive algorithmic innovation.  Alas, the
existing data are not curated, thus making it extremely difficult to access,
interpret, and ultimately use for benchmarking purposes.

Due to the lack of curated ground truth data, current benchmarking efforts with
sequenced human genomes are lacking. The majority of benchmarking today relies
on either simulated data or a limited set of validation data associated with
real-world datasets. Simulated data are valuable but do not tell the full
story, as variant calling is often substantially easier using synthetic reads
generated via simple generative models. Sampled data, as mentioned above, are
not well curated, resulting in benchmarking efforts (such as the Genome in a
Bottle Consortium (GBC) \citep{Zook11} and the Comparison and Analytic Testing
resource (GCAT) \citep{GCAT}) that rely on a single dataset with a limited
quantity of validation data.

Rigorously evaluating predictions against a validation dataset presents
several additional challenges.  Consensus-based evaluation approaches, employed
in various benchmarking efforts, e.g., \cite{1kGenomes10, Kedes11, DePristo11, GCAT},
may be misleading.  Indeed, different methods may in fact make similar errors, a fact which
remains hidden without ground truth data.  In cases where `noisy' ground truth
data are used, e.g., calls based on Sanger-sequencing with some known error rate
or using SNP chips with known error rates, accuracy metrics should account for the
effect of this noise on predictive accuracy.  Additionally, given the
inherent ambiguity in the VCF format used to represent variants, evaluation can
be quite sensitive to the (potentially inconsistent) representations of
predicted and ground truth variants. 
Moreover, due to the growing need to efficiently process raw sequencing data,
computational performance is an increasingly important yet to date largely
overlooked factor in benchmarking.  
There currently exist no benchmarking methodologies that -- in a consistent and
principled fashion -- account for noise in validation data, ambiguity in variant
representation, or computational efficiency of variant calling methods.

Without any standard datasets and evaluation methodologies, research groups
inevitably perform ad-hoc benchmarking studies, working with different datasets
and accuracy metrics, and performing studies on a variety of computational
infrastructures.  Competition-based exercises, e.g., \cite{Kedes11, assemblathon}, are a
popular route for benchmarking that aim to address some of these
inconsistencies, but they are ephemeral by design and often suffer from the
same data and evaluation pitfalls described above. 

In short, the lack of consistency in datasets, computational frameworks, and
evaluation metrics across the field prevents simple comparisons across
methodologies, and in this work we make a first attempt at addressing these
issues. We propose \ourbm,  a standard methodology for benchmarking variant
calling algorithms based on a suite of \textbf{S}ynthetic, \textbf{M}ouse,
\textbf{a}nd \textbf{S}ampled \textbf{H}uman data. \ourbm leverages a rich set
of validation resources, in part bootstrapped from the patchwork of existing
data. 
We provide free and open access to \ourbm, which consists of:
\begin{itemize}
\item A set of 5 full-genomes with associated deep coverage short-read datasets
(real and synthetic);
\item 3 contaminated variants of these datasets that mimic real-world use
cases~\citep{DePristo13} and test the robustness of variant callers in terms
of accuracy and required computational resources; 
\item Ground truth validation data for each genome along with detailed error
profiles;
\item Accuracy metrics that account for the uncertainty in validation data;
\item Methodology to resolve the ambiguity in variant representations, resulting
in stable measurements of accuracy; and
\item Performance metrics to measure computational efficiency (and implicitly
measure software robustness) that leverage the Amazon Web Services cloud
computing environment.
\end{itemize}

\ourbm is designed to facilitate progress in algorithm development by making it
easier for researchers to evaluate their systems against each other.  





\begin{figure}[!t]
\begin{center}
\begin{tabular} {c|ccc}
  \begin{tikzpicture}[scale=.3]
    \draw \firstcircle node[below] {$\mathbf{R}$};
    \draw \secondcircle node [above] {$\mathbf{C}$};
    \draw \thirdcircle node [below] {$\mathbf{H}$};
    \begin{scope}
      \clip \firstcircle;
      \clip \secondcircle;
      \fill[green] \thirdcircle;
    \end{scope}
  \end{tikzpicture} \quad&
\quad
\begin{tikzpicture}[scale=.3]
    \draw \firstcircle node[below] {$\text R$};
    \draw \secondcircle node [above] {$\mathbf C$};
    \draw \thirdcircle node [below] {$\mathbf H$};
    \begin{scope}
      \clip \secondcircle;
      \fill[blue] \thirdcircle;
    \end{scope}
    \whiteintersection{}
  \end{tikzpicture} \quad &
  \begin{tikzpicture}[scale=.3]
    \draw \firstcircle node[below] {$\mathbf R$};
    \draw \secondcircle node [above] {$\mathbf C$};
    \draw \thirdcircle node [below] {$\text H$};
    \begin{scope}
      \clip \firstcircle;
      \fill[red] \secondcircle;
    \end{scope}
    \whiteintersection
  \end{tikzpicture} \quad &
  \begin{tikzpicture}[scale=.3]
    \draw \firstcircle node[below] {$\mathbf R$};
    \draw \secondcircle node [above] {$\text C$};
    \draw \thirdcircle node [below] {$\mathbf H$};
    \begin{scope}
      \clip \thirdcircle;
      \fill[yellow] \firstcircle;
    \end{scope}
    \whiteintersection
  \end{tikzpicture} \\
{\footnotesize Ideal} \quad & \quad {\footnotesize Synthetic} \quad& {\footnotesize Mouse} \quad& {\footnotesize Sampled Human} \\
\end{tabular}
\end{center}
\caption{An `ideal' benchmarking dataset satisfies three properties: it
contains real reads ($\mathbf{R}$), it includes comprehensive validation of the
underlying genome ($\mathbf{C}$), and its underlying genome is human
($\mathbf{H}$). \ourbm contains three types of benchmarking datasets, each of
which satisfies two of the three desirable properties of an ideal dataset, so
as to cover all three properties.}
\label{fig:actual_bm}
\end{figure}

\section{Methods}
\label{sec:approach}
\subsection{Benchmarking Datasets}
\label{ssec:data}
In this section, we describe the benchmarking datasets contained within
\ourbm. 
A `benchmarking dataset' consists of three components.  The first
two components are the inputs to the variant calling algorithm to be
benchmarked, namely short reads generated from next-generation sequencing
technology and a reference genome (used for alignment and variant
representation).  The third component is the validation data (represented via
the standard VCF format) that are used to evaluate the quality of an algorithm's
predictions. 

The left panel of Figure~\ref{fig:actual_bm} illustrates the three desired properties of a
benchmarking dataset. Ideally, we would like to evaluate variant calling
performance on a human genome ({\textbf H}), have access to  comprehensive
validation ({\textbf C})  of the underlying sequenced genome, and call variants
using real reads ({\textbf R}), that is, reads generated by an actual sequencing
machine and not a simulator. To the best of our knowledge, no existing dataset
satisfies all three properties.  
Instead, \ourbm consists of three types of benchmarking datasets 
that satisfy two of these three properties, as depicted in the three
panels on the right of Figure~\ref{fig:actual_bm}.  Additionally, for each type of
dataset, we also include a contaminated version in which the short reads are
contaminated with reads from a separate genome, mimicking the impurities that
can be introduced in practice while preparing a sample and/or using a
contaminated sequencing machine, and thus testing the robustness of variant
callers in this challenging and realistic setting.
Table~\ref{tab:data_summary} summarizes our validation data, and we next
provide details about these datasets.

\begin{table*}[ht!]
\caption{Summary of \ourbm's validation datasets.}
\centering
  \label{tab:data_summary}
  \small{ 
  \begin{tabular}[c]{cccccccc}
  \bf{Type} & \bf{Genome} & \bf{Validation Error} & \bf{Sequencer} &
  \bf{Length (bp)}& \bf{Insert Size (bp)} & \bf{Coverage} \\
    \hline
    \multirow{2}{*}{Synthetic} & Venter & \multirow{2}{*}{none}  &
    \multirow{2}{*}{simNGS} & \multirow{2}{*}{101} & \multirow{2}{*}{400} & \multirow{2}{*}{30x} \\
    & Contam. Venter & & & &  \\
    \hline
    \multirow{2}{*}{Mouse} & B6 strain & \multirow{2}{*}{$0.2\%$ (SNP/Indel), $0.3\%$ (SV)} &
    \multirow{2}{*}{GAIIx} & \multirow{2}{*}{101} & \multirow{2}{*}{-34} &
    \multirow{2}{*}{58.6x} \\
    & Contam. B6 strain & & & & &  \\ 
  \hline
    \multirow{4}{*}{Human} & NA12878 & \multirow{4}{*}{$0.04\%$ (SNP), $1\%$
    (SV)} & HiSeq2000 & 101 & 300 & 50x \\
     & Contam. NA12878 & &  HiSeq2000 & 101 & 300 & 50x \\
     & NA18507 & &  HiSeq2500 & 100 & 300 & 44x \\
     & NA19240 & & GAIIx & 101 & 296 & 49x \\
    \hline
  \end{tabular}
 }
\end{table*}

\subsubsection{Synthetic Datasets}

We derive our synthetic datasets from J. Craig Venter's genome (HuRef).  \markup{We
create a diploid sample genome by applying the HuRef variants provided by
\cite{Levy07} to the hg19 reference genome.}\footnote{\markup{We create an unphased
diploid sample genome by starting with two copies of hg19 and inserting each
HuRef variant into one or both of these copies depending on its zygosity.}}
We simulate Illumina reads from the sample genome using simNGS \citep{simNGS}
with its default settings.
Our second dataset uses the same validation data as the
first, along with a version of Venter's short-read data contaminated by similar
short-read data derived from an approximation of James Watson's genome.

\paragraph{\textbf{Validation Error Profile:}} \markup{Any errors in HuRef will be
carried through to our synthetic genome, and it is likely that errors in HuRef
are sequence-context specific. Nonetheless, since the sample genome
and the reads are synthetically generated, the VCF files contain noiseless
ground truth data.} 

\subsubsection{Mouse Datasets}\label{subsec:mouse}
The mouse datasets leverage existing mouse genomic data associated with the
canonical mouse reference as well as from the Mouse Genomes Project (MGP)
\citep{Yalcin11}. From this data, we create benchmarking datasets with real
reads and with comprehensive validation, using the canonical homozygous mouse
reference as our sample genome \citep{Church09}. Our first
dataset consists of paired-end reads from the B6 mouse strain \citep{Gnerre11},
a VCF derived from differences between the mouse reference (based on the B6
mouse strain), and a `fake' reference we created using an alternative mouse
strain \markup{(the DBA mouse strain)}. Figure~\ref{fig:mouse} illustrates the process
by which we create this dataset, and further details are provided in
Supplementary Material~B.2.
Our second dataset uses the same validation data as the first, along with a
version of the B6 short reads contaminated by short reads corresponding to a
human genome (NA12878). \markup{We use these human reads because, to the best of our
knowledge, they are the only publicly available reads generated by the same
sequencing methodology as the mouse reads~\citep{Gnerre11}.}

\paragraph{\textbf{Validation Error Profile:}}

There are two main sources of error in this dataset, namely  errors in the mouse
reference genome itself and genetic differences between the mouse reference
genome and the individual from which short reads were produced.  Based on
calculations detailed in 
Supplementary Material~B.2,
we upper bound the
error rates for SNPs and indels at $0.2\%$ and the error rate for SVs at
$0.3\%$. Finally, it is worth noting that there are systematic differences
between mouse and human genomes.  Mouse segmental duplication is more
intrachromosomal, and human intrachromosomal duplication is more high-identity
\citep{Church09}. As a result, variant calling performance may vary 
between the mouse and human datasets.

\begin{figure}
\begin{center}
\includegraphics[scale=.39]{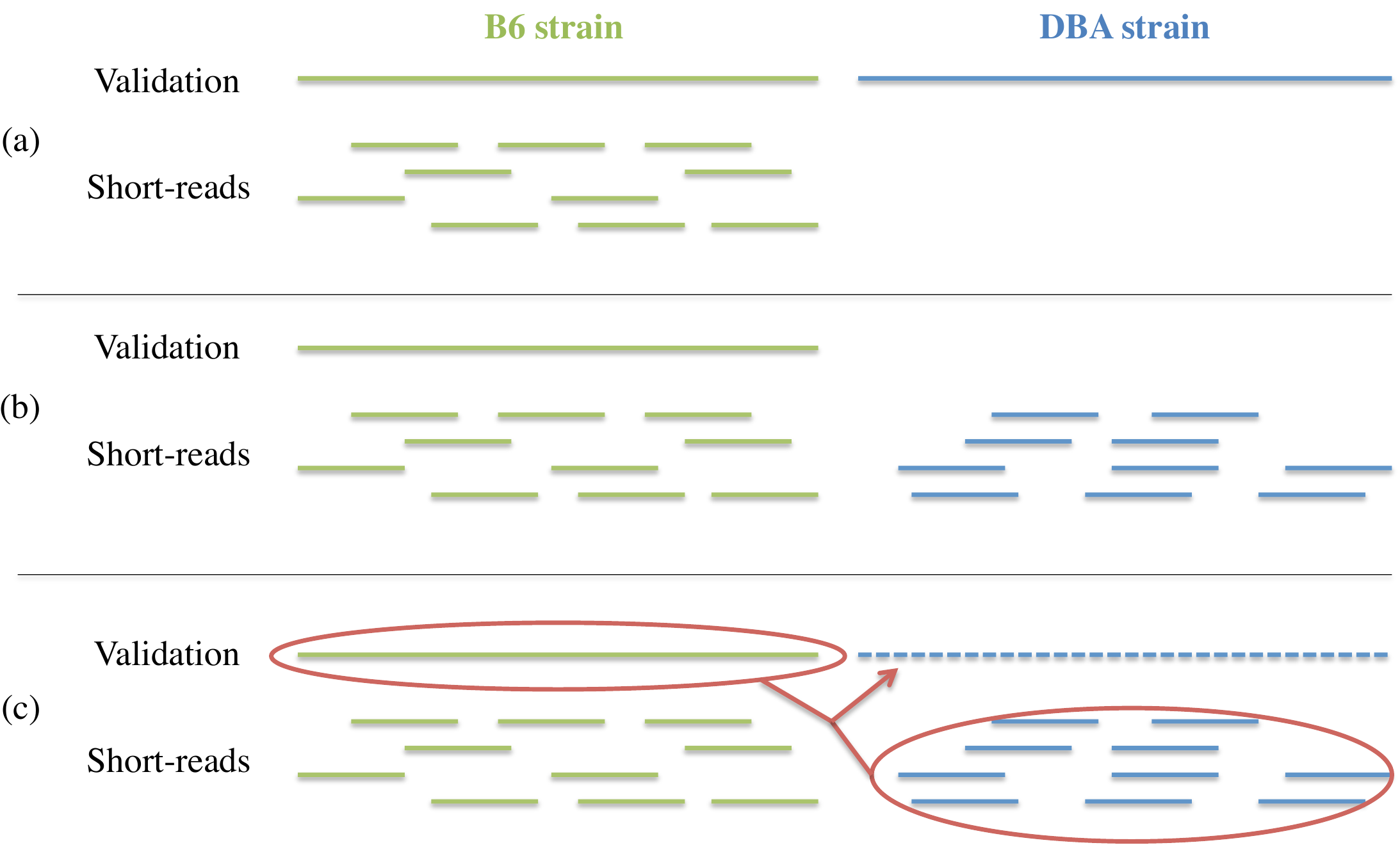}
\end{center}
\caption{Schematic illustrating process by which \ourbm's first `Mouse' dataset
is generated. (a) Our ideal set up in which the B6 strain (with comprehensive
validation and corresponding short reads) serves as the sample and the DBA
strain serves as the reference. (b) Publicly available data (note that the B6
validation data are the canonical mouse reference). (c) Construction of an
approximate DBA validation set (the `fake' reference)  by leveraging a rough
set of variants for the DBA strain called relative to the canonical reference.}
\label{fig:mouse}
\end{figure}

\subsubsection{Sampled Human Datasets}
\label{ssec:sampled_human}
Our real human genomes consist of three well-studied human genomes, including a
European female (NA12878), a Nigerian male (NA18507), and a Nigerian female
(NA19240). Our validation data consist of subsets of validated SNP and SV
information. \ourbm also includes high-coverage Illumina reads for each of
these datasets. 

We derive our validated SNPs from the intersection of calls from two SNP chips
from HapMap2: Perlegen and Illumina BeadArray \citep{Frazer07}. We chose these
chips due to the substantial intersection of their call sites, and because
their calls could be readily disambiguated, unlike the two chip technologies
used in HapMap3 \citep{Consortium10}. The intersection of their results in
NA12878, yields 132K calls, of which 55K are non-reference.
Our SV validation data is 169 insertions and deletions called from
alignment of finished fosmid sequence \citep{Kidd10insertions, Kidd10insights}.


Finally, we include a contaminated version of our
NA12878 dataset, in which the NA12878 short reads are contaminated by short reads
corresponding to this individual's husband (NA12877),
generated by the same sequencing methodology by Illumina's Platinum Genomes project.



\paragraph{\textbf{Validation Error Profile:}}

The error in our validated SNP data is due to errors in the underlying SNP chip
technologies used to generate the data. Moreover, there are various sources of
error for our validated SVs, associated with generating and processing the
fosmid sequences.  As detailed in Supplementary Material~B.3, we upper bound
the error rate for SNPs at $0.04\%$, and the error rate for SVs at $1.0\%$. 

\subsection{Evaluation Metrics}
\label{ssec:metrics}

We now discuss our set of evaluation metrics of variant calling algorithms
against the benchmarking datasets described in
Section~\ref{ssec:data}.
We propose the use of both accuracy and computational performance metrics.
Moreover, although \ourbm focuses on reference-based variant calling methods,
it \emph{does not include alignment-specific metrics}, as alignment is an
intermediate step (albeit an important one) in the process of variant calling.
We believe that improvements in alignment should be measured as a function of
their impact on variant calling, both in terms of accuracy and computational
performance.

\noindent\textbf{Accuracy}: We report two standard metrics from
information retrieval. The first metric, \emph{recall}, measures the
`probability of calling a validated variant,' while the second metric,
\emph{precision}, measures the `probability that a called variant is
correct.'\footnote{See Section~C in the Supplementary Material
for a more detailed discussion of the use of recall and precision in the
context of variant calling.}
For SNPs and indels (50bp or less) we measure alternate alleles and exact breakpoints, 
thus checking zygosity but ignoring phasing.
\markup{For structural variants, we ignore zygosity, and evaluate left breakpoint and
length, both approximately and exactly.  In the former case, we use an error
tolerance of 100bp, which is within a read-length of the true event and thus
sufficiently close to allow for alternative methods, such as targeted assembly,
to reconstruct the event.  We also present the exact evaluation to highlight
variations in accuracy.} In some situations, such as our human SNP or SV
validation data, the validation data have positive labels but little or no
negative labels, and in these situations
only recall (and not precision) is reported.  Additionally, our computation of
recall does not explicitly take into account the impact of sampling in the
context of \ourbm's sampled human benchmarking datasets.

\noindent\textbf{Computational Performance}: We use two chief metrics to
measure computational performance, namely \emph{hours per genome} and
\emph{dollars per genome}, and we benchmark performance on Amazon Web
Services (AWS). AWS's cloud infrastructure allows for reproducible benchmarking
and ensures robust implementations of variant calling algorithms. When using \ourbm, 
researchers can benchmark algorithms on AWS using their
preferred compute instances, such as single core, multicore, GPU or distributed
cluster, and AWS's pricing mechanism naturally dictates the tradeoff between
cost and time. \markup{We additionally report more fine-grained performance metrics to help
researchers optimize their AWS configuration, including clock time, CPU time, the
maximum number of threads used, the maximum disk space required, and the maximum and average
amount of memory used during the run of the algorithm.}


\subsubsection{Accounting for noisy validation data}
The performance of an algorithm can only be quantified up to the level of noise
in the validation data itself. Since we are working with noisy validation data,
it is crucial to capture this uncertainty when reporting results.  
%
%
To do so, we assume that we have an estimate for the number $E$ of
validation errors.  In settings where we only have positive labels, this
estimate captures the number of positive labels that are incorrectly genotyped
(either a positive label where there should be none, or a validated variant
that is correctly located but incorrectly genotyped).  In settings where we
have access to both positive and negative labels, this estimate measures either
omissions or errors of the type described above. 
%

To quantify these errors, we first note that each called variant can be described
via two sets of labels: positive or
negative depending solely on the caller, and true or false depending also on
the validation dataset. We can compute true positives (TP), false positives (FP) and false
negatives (FN) from these two sets of labels. Proposition~\ref{prop:real_error} presents bounds on recall
and precision given $E$ and in terms of TP, FP and FN.  These bounds hold generally for SNPs, indels and SVs
evaluation, and for various evaluation metrics, e.g., metrics considering
zygosity, insertion sequence, etc. (see
Section~D in the Supplementary Material for further details and proof).

\begin{prop}
\label{prop:real_error}
Let $\present = \TP + \FN$ be the number of positive labels, and let $P = \TP + \FP$
be the number of positive calls.  In the case of only positive validated
labels,
  \[
  \frac{\TP - E}{\present} \leq \recall \leq
  \begin{cases}
    \frac{\TP + E}{\present} &\mbox{if } E \leq \FN \\
    \frac{\TP}{\present - E} &\mbox{otherwise.}
  \end{cases}
  \]
In the case of both positive and negative labels, the same recall bounds apply,
and the following precision bounds hold: 
\begin{equation}
\frac{\TP - E}{P} \leq \precision \leq \frac{\TP + E}{P} \,.
\end{equation}
\end{prop}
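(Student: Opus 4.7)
The plan is to separate the observed confusion-matrix entries $(\TP, \FP, \FN)$, computed against the noisy validation, from the corresponding ``true'' quantities $(\TP^{*}, \FP^{*}, \FN^{*})$ that one would obtain against error-free ground truth, and then to bound the true recall and true precision solely in terms of the observable counts and the error budget $E$. The key preliminary observation is that the caller's outputs are fixed, so $P = \TP + \FP$ is invariant under validation errors, whereas $\present = \TP + \FN$ can only shrink when a spurious validation positive is corrected away.

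First I would enumerate the effect of a single validation error on the counts. Such an error is either (i) a spurious positive, where the truth is absent at the position, or (ii) a miscalled genotype, where the truth has a different variant at the position. Running through the caller's possibilities at that position (matching the validation, matching the truth, matching neither, or silent) shows that a single error alters each of $\TP^{*}$, $\FP^{*}$, and $\present^{*}$ by at most one, and crucially that only type~(i) errors can pull $\present^{*}$ below $\present$. Aggregating over at most $E$ errors then yields $|\TP^{*} - \TP| \leq E$, $|\FP^{*} - \FP| \leq E$, and $\present - E \leq \present^{*} \leq \present$.

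The precision bounds follow immediately: since $P$ is invariant, $\precision = \TP^{*}/P$ lies in $[(\TP - E)/P,\;(\TP + E)/P]$. For recall, the lower bound $(\TP - E)/\present$ comes from simultaneously pushing $\TP^{*}$ down and keeping $\present^{*}$ at $\present$, which is realised by placing all $E$ errors as type~(ii) misgenotypes at observed TPs. For the upper bound I would derive two candidate inequalities and take the tighter one. Type~(ii) errors that convert observed FNs into true TPs give $\recall \leq (\TP + E)/\present$; type~(i) errors at uncalled validation positives give $\recall \leq \TP/(\present - E)$. A short algebraic comparison shows $(\TP + E)/\present \leq \TP/(\present - E)$ exactly when $E \leq \FN$, so taking the pointwise minimum of the two candidates recovers the piecewise expression in the statement.

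The main obstacle will be the case analysis in the second step: establishing which pairs $(\Delta \TP^{*}, \Delta \present^{*})$ are simultaneously attainable by a single error, and checking that the extremal configurations used to realise each of the two candidate upper bounds on recall are jointly feasible subject to the cap of $E$ total errors and to the availability of observed FNs (respectively observed TPs) to absorb them. Once this compatibility bookkeeping is in place, the piecewise upper bound reduces to the fraction comparison above, and the remaining inequalities are routine.
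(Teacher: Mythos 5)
Your overall framework --- perturbing the observed counts $(\TP,\FP,\FN)$ into ``true'' counts that lie within $E$ of the observed ones, and noting that $P$ is caller-intrinsic --- is essentially the paper's argument in different clothing (the paper phrases it as transfers among five site types), and your precision bound is fine. But two steps as written do not go through. First, the recall upper bound: you derive one candidate bound from the all-type-(ii) configuration and another from the all-type-(i) configuration and then ``take the pointwise minimum.'' That is not valid logic --- each candidate is an upper bound only under its own assumption about the error composition, so to bound recall over \emph{all} feasible configurations (including mixtures of $a$ type-(ii) and $E-a$ type-(i) errors) you must take the \emph{maximum} over $a$ of $(\TP+a)/(\present-E+a)$; the paper does exactly this and observes that the derivative's sign is that of $\FN-E$, so the maximum sits at an endpoint. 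Your supporting algebra is also reversed: cross-multiplying gives $(\TP+E)(\present-E)-\TP\cdot\present = E(\FN-E)$, so $(\TP+E)/\present \geq \TP/(\present-E)$ precisely when $E\leq\FN$, the opposite of what you assert. The two mistakes cancel and you land on the correct piecewise formula, but the argument as stated is not a proof.

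Second, your error taxonomy is incomplete for the positive-and-negative case: there $E$ also counts \emph{omissions} --- sites the validation labels as negative that actually harbor a variant (the paper's third error type). An omission increases $\present^{*}$, so your claim $\present-E\leq\present^{*}\leq\present$ fails in that case, and with it your justification that the same recall bounds carry over. They do carry over --- an omission at a called site trades an observed FP for a true TP while also enlarging the denominator, and one at a silent site only enlarges the denominator, so such errors are dominated by the two types you consider --- but this must be checked; the paper does so by introducing the third transfer type with its own count $c$ and verifying that the extremum occurs at $c=0$ for the upper bound and is absorbed into the same endpoint analysis for the lower bound. With the monotonicity-in-$a$ argument and the omission case added, your proof would be complete.
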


Proposition~\ref{prop:real_error} states that recall and precision
have worst-case additive errors of the form $E/\present$ and $E/P$,
respectively.  We use these bounds when reporting results in Section~\ref{section:evaluation}. 

\subsubsection{Ambiguity Resolution}
\label{ssec:ambiguity}
\ourbm incorporates three steps in the evaluation process  in order to minimize
the impact of VCF ambiguity. The first two steps, \emph{cleaning} and
\emph{left normalization}, are (standard) VCF preprocessing steps that we
perform independently on both the ground truth VCF and the predicted VCF.
The cleaning step removes ambiguity associated with case discrepancies, and
also filters out extraneous VCF entries, i.e., homozygous reference calls and
calls where the reference and alternate allele match.  Left normalization
involves left shifting all variants as far as possible and is the VCF standard,
though this convention is not followed by all variant calling algorithms.  Left
normalization removes certain types of ambiguity associated with indels and SVs,
such as by unambiguously representing the deletion `GCGCGC $\rightarrow$ GCGC' as
a deletion event associated with the two leftmost `GC' bases. 

Our final step is a novel ambiguity resolution algorithm, \rescue, which 
involves a second pass over the VCF files during evaluation.  \markup{After initially
strictly comparing the calls between the predicted and the true VCFs, we aim to
`rescue' variants marked as incorrect (both false positives and false
negatives) due to VCF ambiguity. For each such call, we create two short
sequences by expanding the full sequence in some short window around the call
in the true and predicted VCF, respectively.  We then rescue the call if the
two sequences are equivalent. Rescued calls are thus by definition correct,
and notably, 
\rescue can only improve the quality of the reported precision and recall
figures. Nonetheless, the number of calls that \rescue is able to rescue is
dependent on the window size, as discussed in
Section~\ref{ssec:amb_res_results}.}

\markup{See Section~E in the Supplementary Material for
further
algorithmic details, including discussions about edge cases such as overlapping
alleles and combinations of alleles that cancel out.} 

\begin{table*}[ht]
  \centering
\caption{Effect of the window size parameter in the \rescue algorithm on indel
precision and recall for mpileup and GATK on the Venter genome. Error bounds are
  excluded since there is no uncertainty in the Venter validation data. }
\label{tab:winsize}
\small{
\begin{tabular}{c|cc|cc||cc|cc}
& \multicolumn{4}{c||}{mpileup} & \multicolumn{4}{c}{GATK} \\ 
& \multicolumn{2}{c}{Insertions} & \multicolumn{2}{c||}{Deletions} & \multicolumn{2}{c}{Insertions} & \multicolumn{2}{c}{Deletions} \\
\hline
Window & Prec & Rec & Prec & Rec & Prec & Rec & Prec & Rec \\ 
\hline
25 & 85.8\% & 70.9\%& 91.2\%& 75.2\%& 90.3\%& 87.8\%& 91.9\%& 90.6\%\\ 
50 & 85.8\%& 70.9\%& 91.3\%& 75.2\%& 90.6\%& 88.1\%& 92.3\%& 90.9\%\\ 
100 & 85.8\%& 70.9\%& 91.3\%& 75.2\%& 90.6\%& 88.1\%& 92.3\%& 90.9\%\\
150 & 85.7\%& 70.9\%& 91.3\%& 75.2\%& 90.5\%& 88.1\%& 92.2\%& 90.8\%
\end{tabular}
}
\end{table*}

\subsection{Usage}
All the materials necessary to run \ourbm are available at our website,
\ourwebsite. All relevant files are available for download, including
BWA-aligned BAM files containing the raw reads, ground truth VCF files and
reference files.  All scripts used to calculate results are available in a
public repository at \ourrepo, including evaluation, rescue, VCF normalization
and contamination scripts. \markup{All data included in \ourbm are derived from
publicly available sources and thus can be freely redistributed.}

We provide detailed instructions for running \ourbm on AWS. For a researcher
wishing to benchmark a new aligner, we describe how to download the
short reads, run the aligner on AWS, execute some or all of the variant callers
evaluated in Section~\ref{section:evaluation} and run the \ourbm evaluation
scripts to get performance and accuracy metrics.  In contrast, for a researcher
with a new variant caller, we describe how to download our aligned BAM files,
run the caller on AWS and evaluate the overall performance and accuracy.

Finally, we plan to update \ourbm as new validated datasets become
available. We also invite users to submit performance and accuracy results
associated with new aligner and variant caller pipelines to our results page.

\begin{table*}[ht!]
  \caption{Effect of ambiguity resolution on benchmarking GATK and mpileup on
  indels using the mouse dataset. The results illustrate the impact of each successive
  step of resolution, namely, cleaning, left normalization and rescuing.}
  \centering
  \label{tab:ambiguity_mouse}
\small{
\begin{tabular}[r]{c|cc|cc||cc|cc}
   & \multicolumn{4}{c||}{mpileup} & \multicolumn{4}{c}{GATK} \\
   & \multicolumn{2}{c}{Insertions} & \multicolumn{2}{c||}{Deletions} & \multicolumn{2}{c}{Insertions} & \multicolumn{2}{c}{Deletions} \\
  \hline
   Strategy & Pre & Rec & Pre & Rec & Pre & Rec & Pre & Rec \\
  \hline
  Cleaning & 81.1 $\pm$ 3.5 & 12.2 $\pm$ 0.5 & 75.2 $\pm$ 3.3 & 12.6 $\pm$ 0.6 & 73.1 $\pm$ 0.4 & 86.4 $\pm$ 0.5 & 68.9 $\pm$ 0.4 & 91.2 $\pm$ 0.6 \\ 
 Normalization & 76.6 $\pm$ 0.5 & 66.5 $\pm$ 0.4 & 76.6 $\pm$ 0.5 & 74.9 $\pm$ 0.5 & 85.7 $\pm$ 0.4 & 84.0 $\pm$ 0.4 & 80.5 $\pm$ 0.4 & 89.9 $\pm$ 0.5 \\ 
 \rescue & 87.8 $\pm$ 0.5 & 76.6 $\pm$ 0.4 & 79.0 $\pm$ 0.4 & 85.9 $\pm$ 0.5 & 92.0 $\pm$ 0.5 & 86.2 $\pm$ 0.4 & 85.5 $\pm$ 0.4 & 91.8 $\pm$ 0.5 \\
  \end{tabular}
}
\end{table*}

\begin{table*}[ht]
  \centering
\small{
  \caption{\markup{Benchmarking results for SNPs.}}
  \label{tab:SNP}
  \begin{tabular}[r]{r|rrrr||rrrr}
   dataset & \multicolumn{4}{c||}{mpileup} & \multicolumn{4}{c}{GATK} \\
  \hline
     & \multicolumn{1}{c}{Hours} & \multicolumn{1}{c}{Cost} & \multicolumn{1}{c}{Pre} & \multicolumn{1}{c||}{Rec} & \multicolumn{1}{c}{Hours} & \multicolumn{1}{c}{Cost} & \multicolumn{1}{c}{Pre} & \multicolumn{1}{c}{Rec}\\
  \hline
Venter & $2 h$ & $\$ 5$ & $98.7\% \pm 0.0$ & $97.0\% \pm 0.0$ & $57 h$ & $\$ 142$ & $99.3\% \pm 0.0$ & $91.7\% \pm 0.0$ \\
contam. Venter & $3 h$ & $\$ 8$ & $91.3\% \pm 0.0$ & $96.7\% \pm 0.0$ & $75 h$ & $\$188$ & $98.7\% \pm 0.0$ & $91.8\% \pm 0.0$ \\
NA12878 & $5 h$ & $\$13$ &  -  & $98.8\% \pm 0.0$ & $86 h$ & $\$215$ &  -  & $98.8\% \pm 0.0$ \\
contam. NA12878 & $5 h$ & $\$13$ &  -  & $98.8\% \pm 0.0$ & $110 h$ & $\$275$ &  -  & $98.8\% \pm 0.0$ \\
NA18507 & $4 h$ & $\$10$ &  -  & $99.0\% \pm 0.0$ & $154 h$ & $\$385$ &  -  & $99.0\% \pm 0.0$ \\
NA19240 & $4 h$ & $\$10$ &  -  & $98.7\% \pm 0.0$ & $167 h$ & $\$418$ &  -  & $99.0\% \pm 0.0$ \\
mouse & $6 h$ & $\$15$ & $98.4\% \pm 0.2$ & $87.3\% \pm 0.2$ & $107 h$ & $\$268$ & $97.8\% \pm 0.2$ & $94.9\% \pm 0.2$ \\
contam. mouse & $5 h$ & $\$13$ & $98.3\% \pm 0.2$ & $86.7\% \pm 0.2$ & $96 h$ & $\$240$ & $97.9\% \pm 0.2$ & $94.6\% \pm 0.2$ \\ 
\end{tabular}
}
\end{table*}
\begin{table*}
  \centering
\small{
  \caption{\markup{Benchmarking results for small deletions (excluding Pindel results).}}
  \label{tab:indel_del}
  \begin{tabular}[r]{r|rrrr||rrrr}
   dataset & \multicolumn{4}{c||}{mpileup} & \multicolumn{4}{c}{GATK} \\
  \hline
     & \multicolumn{1}{c}{Hours} & \multicolumn{1}{c}{Cost} & \multicolumn{1}{c}{Pre} & \multicolumn{1}{c||}{Rec} &\multicolumn{1}{c}{Hours} & \multicolumn{1}{c}{Cost} & \multicolumn{1}{c}{Pre} & \multicolumn{1}{c}{Rec} \\
  \hline
Venter & $2 h$ & $\$ 5$ & $91.3\% \pm 0.0$ &  $75.2\% \pm 0.0$ & $57 h$ & $\$ 142$ & $92.3\% \pm 0.0$ & $90.9\% \pm 0.0$ \\
contam. Venter & $3 h$ & $\$ 8$ & $91.7\% \pm 0.0$ &  $71.7\% \pm 0.0$ & $75 h$ & $\$188$ & $92.4\% \pm 0.0$ & $90.5\% \pm 0.0$ \\
mouse & $6 h$ & $\$15$ & $79.0\% \pm 0.4$ &  $85.9\% \pm 0.4$ & $107 h$ & $\$268$ & $81.5\% \pm 0.4$ & $95.8\% \pm 0.4$ \\
contam. mouse & $5 h$ & $\$13$ & $80.4\% \pm 0.4$ &  $84.9\% \pm 0.4$ & $96 h$ & $\$240$ & $82.8\% \pm 0.4$ & $95.6\% \pm 0.4$ \\
\end{tabular}
}
\end{table*}

\begin{table*}[ht!]
\small{
  \caption{\markup{Benchmarking results for long deletion (approximate evaluation with breakpoint tolerance of 100bp). }}
  \centering
  \label{tab:sv_del}
  \begin{tabular}[r]{r|rrrr||rrrr}
   dataset & \multicolumn{4}{c||}{Pindel} & \multicolumn{4}{c}{BreakDancer} \\
  \hline
     & \multicolumn{1}{c}{Hours} & \multicolumn{1}{c}{Cost} & \multicolumn{1}{c}{Pre} & \multicolumn{1}{c||}{Rec} & \multicolumn{1}{c}{Hours} & \multicolumn{1}{c}{Cost} & \multicolumn{1}{c}{Pre} & \multicolumn{1}{c}{Rec}\\
  \hline
Venter & $80 h$ & $\$ 200$ & $14.2\% \pm 0.0$ & $43.3\% \pm 0.0$& $2 h$ & $\$ 5$ & $6.9\% \pm 0.0$ & $4.9\% \pm 0.0$ \\
contam. Venter & \tooslow & \toomuch & - & - & $1 h$ & $\$ 2$ & $0.0\% \pm 0.0$ & $0.0\% \pm 0.0$ \\
NA12878 & $81 h$ & $\$203$ &  -  & $78.3\% \pm 2.9$& $3 h$ & $\$7$ &  -  & $95.7\% \pm 2.9$ \\
contam. NA12878 & \tooslow & \toomuch & - & - & \tooslow & \toomuch & - & -  \\
NA18507 & $168 h$ & $\$420$ &  -  & $60.0\% \pm 3.0$& $3 h$ & $\$7$ &  -  & $60.0\% \pm 3.0$ \\
NA19240 & $221 h$ & $\$ 552$ &  -  & $70.7\% \pm 2.8$& $29 h$ & $\$73$ &  -  & $77.6\% \pm 2.8$ \\
mouse & \tooslow & \toomuch & - & - & $4 h$ & $\$10$ & $18.1\% \pm 1.3$ & $11.6\% \pm 0.8$ \\
contam. mouse & $345 h$ & $\$ 862$ & $6.6\% \pm 4.8$ & $53.4\% \pm 0.8$& $3 h$ & $\$7$ & $18.7\% \pm 16.0$ & $1.0\% \pm 0.8$ \\ 
\end{tabular}
}
\end{table*}

\begin{table*}[ht!]
\small{
  \caption{\markup{Benchmarking results for long deletions (exact evaluation).}}
  \centering
  \label{tab:sv_del_0bp}
  \begin{tabular}[r]{r|rrrr||rrrr}
   dataset & \multicolumn{4}{c||}{Pindel} & \multicolumn{4}{c}{Breakdancer} \\
  \hline
     & \multicolumn{1}{c}{Hours} & \multicolumn{1}{c}{Cost} & \multicolumn{1}{c}{Pre} & \multicolumn{1}{c||}{Rec} & \multicolumn{1}{c}{Hours} & \multicolumn{1}{c}{Cost} & \multicolumn{1}{c}{Pre} & \multicolumn{1}{c}{Rec}\\
  \hline
Venter & $80 h$ & $\$200$ & $0.0\% \pm 0.0$ & $40.6\% \pm 0.0$& $2 h$ & $\$ 5$ & $0.0\% \pm 0.0$ & $0.2\% \pm 0.0$ \\
contaminated Venter & \tooslow & \toomuch & - & - & $1 h$ & $\$ 2$ & $0.0\% \pm 0.0$ & $0.0\% \pm 0.0$ \\
NA12878 & $81 h$ & $\$203$ &  -  & $78.3\% \pm 2.9$& $3 h$ & $\$7$ &  -  & $0.0\% \pm 2.9$ \\
contaminated NA12878 & \tooslow & \toomuch & - & - & \tooslow & \toomuch & - & -  \\
NA18507 & $168 h$ & $\$420$ &  -  & $60.0\% \pm 3.0$& $3 h$ & $\$7$ &  -  & $0.0\% \pm 3.0$ \\
NA19240 & $221 h$ & $\$ 552$ &  -  & $70.7\% \pm 2.8$& $29 h$ & $\$73$ &  -  & $5.2\% \pm 2.8$ \\
mouse & \tooslow & \toomuch & - & - & $4 h$ & $\$ 10$ & $0.0\% \pm 1.3$ & $0.1\% \pm 0.8$ \\
contaminated mouse & $345 h$ & $\$ 862$ & $0.0\% \pm 4.8$ & $52.3\% \pm 0.8$& $3 h$ & $\$7$ & $0.0\% \pm 16.0$ & $0.0\% \pm 0.8$ \\
\end{tabular}
}
\end{table*}

\begin{table*}[ht!]
\caption{\markup{Computational performance statistics for the Venter dataset.}}
  \centering
  \label{tab:venter_stats}
  \begin{tabular}[r]{r|rrrrrrr}
  caller & Clock Time & Cost & CPU Time & Max Threads & Max disk & Max memory & Avg. memory \\
  \hline
mpileup & $2 h$ & $\$5$ & $28 h$ & $24$ & $0.9$ GB & $59.1$ GB & $58.8$ GB \\
GATK & $57 h$ & $\$142$ & $179 h$ & $16$ & $530$ GB & $59.1$ GB & $53.4$ GB \\
Pindel & $80 h$ & $\$200$ & $80 h$ & $1$ & $3.4$ GB & $59.1$ GB & $58.7$ GB \\
BreakDancer & $2 h$ & $\$5$ & $2 h $ & $1$ & $0.02$ GB & $59.1$ GB & $58.7$ GB \\
  \end{tabular}
\end{table*}

\section{Results}
\label{section:evaluation}

In this section, we first evaluate the impact of our ambiguity resolution
algorithms. We next illustrate the utility of \ourbm by evaluating the
performance of some leading SNP, indel, and structural variant calling
algorithms.  The goal of these experiments is to highlight \ourbm's
functionality, and to simplify the discussion, we use the default settings for
all variant calling algorithms and the same machine instance in all of our EC2
experiments. We note that improved accuracy and computational performance results
may indeed be possible via parameter tuning and optimizing to minimize EC2
instance footprints.

In all reported results, evaluation is reported as described in
Section~\ref{ssec:metrics}. See Section~G in the
Supplementary Material for further implementation details.

\subsection{Ambiguity Resolution}
\label{ssec:amb_res_results}

We first examine the precision and recall rates for mpileup and GATK as a
function of this user-specified window parameter, as Table~\ref{tab:winsize}
illustrates.
We see that precision and recall are quite robust to various window sizes.
Nonetheless, for small windows ($\le25$bp), comparatively fewer true positives
are rescued, likely due to the omission of variants associated with an
ambiguously represented event. Meanwhile, large windows ($\ge150$bp) also lead
to fewer rescues, likely due to nearby false positives unrelated to the
ambiguously represented event entering the window.  We observe that a $50-100$bp
window balances these two tradeoffs, and we use a $50$bp window in all
subsequent experiments for computational reasons.

Next, Table~\ref{tab:ambiguity_mouse} shows the impact of each of the three
ambiguity resolution steps. The results show that all three steps significantly
impact the precision and recall of both GATK~\citep{DePristo11} and
mpileup~\citep{Li09} on calling indels for the mouse dataset. Our ambiguity
resolution has a similar impact on indel detection for the Venter dataset, and
also has a significant (though less drastic) impact on SNP detection for both
datasets (see Section~E in the Supplementary Material
for further results).

\subsection{SNP Calling}

We benchmark the performance of GATK and mpileup to call SNPs, 
and Table~\ref{tab:SNP} summarizes the results.  The results show that
GATK is more computationally expensive, but does not strictly outperform
mpileup on the uncontaminated datasets.
Moreover, the effect of contamination on precision is fairly visible on Venter,
where mpileup's accuracy clearly degrades while GATK appears robust to the
contamination.  The difference in performance on the mouse and contaminated
mouse dataset is less pronounced, since, as noted in Section~F
(Supplementary Material), the aligner was able to filter the contaminated reads
before they were processed by mpileup or GATK.

\subsection{Indel Calling}
\label{ssec:indel_eval}

We evaluate the performance of mpileup, GATK and Pindel on the detection of
indels, in particular insertions or deletions 50bp or less, with detailed
results presented in \markup{Table~\ref{tab:indel_del}},
Table~S6, and Table~S7.\footnote{\ourbm's
sampled human benchmarking datasets do not include any validated indels, and so
the indel results are restricted to the mouse and synthetic human datasets.}
These results demonstrate that GATK and Pindel outperform mpileup in terms of
accuracy, but are more expensive computationally.  In fact, Pindel highlights
the importance of \ourbm's computational performance metrics, as it failed to
complete within our predetermined time limit of 400 hours (i.e., a $\$1000$ AWS
budget) on both the contaminated Venter and mouse datasets, and we thus did not
obtain results for these experiments. On both contaminated Venter and
contaminated mouse, mpileup and GATK gained slightly in precision and worsened
in recall as compared to their uncontaminated counterparts. This difference
seems to result from the fact that both algorithms predicted fewer indels
overall on the contaminated sets than they did SNPs; for example, mpileup
called 2.5\% fewer indel deletions on mouse and only 0.5\% fewer SNPs.
Consistent with the SNP results, however, mpileup was less robust to
contamination than GATK.

\subsection{Structural Variant Calling}
We benchmark Pindel and BreakDancerMax on insertions and deletions greater than
50bp in length, reporting \markup{approximate breakpoint  
results} in Table~\ref{tab:sv_del} and Table~S8 (Supplementary Material).
We first note that we do not report results for the several experiments that
ran longer than our 400 hour budget, namely BreakDancer on contaminated NA12878
and Pindel on mouse, contaminated Venter and contaminated NA12878.  On the
experiments that did complete, we observe that (perhaps unsurprisingly) SV
accuracy is much lower than SNP and indel accuracies, and in particular both
Pindel and BreakDancer have fairly low accuracy on long insertions.  Indeed,
for NA12827, BreakDancer misses all insertions even though it accurately
identifies more than half of the deletions.  Moreover, we observe that
BreakDancer's recall is much higher for the sampled human datasets than for
Venter and mouse. This discrepancy can be explained by the fact that, unlike
the sampled NA12878 validation data, the comprehensive Venter and mouse
datasets contain many short structural deletions, and BreakDancer is not
designed for shorter variants. 

\markup{We observe drastically different results when evaluating the exact
breakpoint of accuracy of these methods (see Table~\ref{tab:sv_del_0bp} and
Table~S9).  Indeed, both algorithms suffer significant
decreases in accuracy under this more stringent evaluation metric. The
precision for both algorithms on all applicable datasets approaches
zero, and BreakDancer makes almost no exactly accurate calls for insertions or
deletions. Pindel, however, does identify some long insertions and deletions by
exact breakpoint, thus yielding a minimal degree of recall.}



Finally, we note that neither caller handles the contaminated datasets
particularly well, though with somewhat different modes of failure. Pindel
suffered computationally when dealing with the contaminated datasets, failing to
complete on the contaminated Venter and the contaminated NA12878 datasets, and
required 375 hours to process the contaminated mouse dataset.  Although
BreakDancer did not complete on the contaminated NA12878 dataset, it in fact
executed very quickly on both the contaminated Venter and contaminated mouse datasets.
However, its accuracy suffered greatly relative to its accuracy on the analogous
non-contaminated datasets, as it found almost no variants in either case.

\subsection{\markup{Computational Performance}}
\markup{Although the cost of running a given caller on Amazon's AWS platform provides a
convenient single metric for comparison, we also provide more fine-grained
computational performance metrics to highlight differences between the callers
and to help researchers optimize their choice of computational platforms. In
Table~\ref{tab:venter_stats} we present the performance metrics for all four
callers on the Venter dataset (see Table~S10 in Supplementary Material for
statistics on other datasets). All four callers use virtually all 60.5GB of memory available
to them on Amazon's cc2.8xlarge instance; mpileup, Pindel, and BreakDancer do
so consistently through their runs, but GATK's memory usage fluctuates as
shown by its lower average memory usage, most likely because only portions of
the GATK pipeline are multi-threaded.  GATK also requires a large amount of
disk space, while mpileup and Pindel use only modest amounts; since
BreakDancer's output is in a more compact format than VCF, it requires almost
none.  We also note that since Pindel became memory-bound on our chosen
instance type, we ran it single-threaded; we thus ran BreakDancer on a single
thread for consistency.}

\section{Discussion}

Hundreds of variant calling algorithms have been proposed, and the majority of
these algorithms have been benchmarked in some form (see detailed discussion in
Section~A in the Supplementary Material). To the best of our
knowledge, none of these existing benchmarking methodologies account for noise
in validation data, ambiguity in variant representation, or computational
efficiency of variant calling methods in a consistent and principled fashion.
Given the rapid growth of next-generation sequencing data, the need for a
robust and standardized methodology has never been greater.

The IT industry serves as an illuminating case-study in the context of
benchmarking. Similar to next-generation sequencing technology, the IT industry
has benefited greatly from the additional hardware resources provided by
Moore's Law. However, the industry's rapid progress also hinged on
the agreement on proper metrics to measure performance as well
as consensus regarding the best benchmarks to run to fairly evaluate competing
systems \citep{Patterson12}. Prior to this industry-wide agreement, each
company invented its own metrics and ran its own set of benchmarking evaluations,
making the results incomparable and customers suspicious of them.  Even worse,
engineers at competing companies were unable to determine the usefulness of
their competitors' innovations, and so the competition to improve performance
occurred only within companies rather than between them. Once the IT industry
agreed on a fair playing field, progress accelerated as engineers could see
which ideas worked well (and which did not), and new techniques were developed
to build upon promising approaches.

Similarly, we believe that \ourbm could help accelerate progress in the field of
genomic variant calling.  We have compiled a rich collection of datasets and
developed a principled set of evaluation metrics that together allows for
quantitative evaluation of variant calling algorithms in terms of accuracy,
computational efficiency and robustness/ease-of-use (via ability to run on
AWS). \markup{Moreover, although \ourbm currently focuses on benchmarking variant
calling algorithms for normal human genomes, we believe that the motivating
ideas behind \ourbm, along with the tools developed as part of \ourbm, will be
useful in devising analogous variant calling benchmarking toolkits for human
cancer genomes and for the genomes of other organisms.}

Finally, we view \ourbm as a work in progress, as the contents of
\ourbm reflect (and are limited by) existing technologies. \ourbm currently has
limited ground truth data for human genomes, and the validation data across
datasets are enriched in `easier' non-repetitive regions due to underlying
sequencing and chip biases. Like any benchmarking suite, \ourbm must evolve
over time in order to stay relevant.  As new sources of validation data become
available, e.g., the NA12878 knowledge base \citep{Knowledgebase} or curated variants from
Illumina platinum genome trios \citep{platinum}, these datasets should be incorporated
into \ourbm.  Existing datasets should also be updated to keep them fresh and
prevent algorithms from `overfitting' to stale benchmarks, and benchmarking
datasets should be deprecated as new data sources obviate their utility.


\section*{Acknowledgement}
This research was supported in part by NSF awards 1122732 and DBI-0846015, NIH
National Research Service Award Trainee appointment on T32-HG00047, NSF CISE
Expeditions award CCF-1139158, DARPA XData Award FA8750-12-2-0331, and gifts
from Amazon Web Services, Google, SAP,  Cisco, Clearstory Data, Cloudera,
Ericsson, Facebook, FitWave, General Electric, Hortonworks, Huawei, Intel,
Microsoft, NetApp, Oracle, Samsung, Splunk, VMware, WANdisco and Yahoo!.  

We thank David Bentley, Bill Bolosky, Mauricio Carneiro, Mark Depristo, Michael
Eberle, Adam Ewing, Gaddy Getz, David Haussler, Jeff Kidd, Jon Kuroda, Elliott Margulies,
Jim Mullikin, Frank Nothaft, Ravi Pandya, Benedict Paten, Taylor Sittler, Arun
Wiita, Kai Ye, and Matei Zaharia for useful insights. 
    

\bibliographystyle{plainnat}
\bibliography{benchmarking_refs}

\clearpage
\appendix

\renewcommand{\thesection}{\Alph{section}}
\setcounter{figure}{0}
\setcounter{table}{0}
\makeatletter 
\renewcommand{\thetable}{S\@arabic\c@table}
\renewcommand{\thefigure}{S\@arabic\c@figure}
\makeatother

\section{Extended Survey of Related Work}
\label{sec:related}
Many variant calling algorithms have been proposed, and the majority of these algorithms
have been benchmarked in some form. To the best of our knowledge, none of these
existing benchmarking methodologies account for noise in validation data,
ambiguity in variant representation, or computational efficiency of variant
calling methods in a consistent and principled fashion. Although it is not
practical to list all of these past works, in this section we highlight some
notable studies that illustrate the current state of benchmarking. 

The 1000Genomes Project \citepappendix{1kGenomes10} calls
SNPs, indels, and structural variants (SVs) using a rich variety of
computational approaches and reports the consensus variant calls of these
approaches. The project also invests significant resources to estimate false positive
rates by generating an ad-hoc collection of validation data using orthogonal
technologies. In contrast to \ourbm, this approach is aimed at
validating variants called by specific algorithms, and moreover, the validation
efforts of this project vary significantly across algorithm and problem type
(e.g., SNPs called from exome sequencing are validated
using a different methodology than SVs called from high-coverage full genome sequencing).  

The Archon Genomics X PRIZE validation protocol \citepappendix{Kedes11}
proposes a consensus approach to evaluate variant callers, utilizing a variety
of technologies including next-generation sequencing, microarrays, and Sanger
sequencing.  The developers of the CloudBreak SV caller
\citepappendix{Whelan13} benchmark their algorithm against a synthetic genome
constructed from the Venter genome \citepappendix{Levy07}, similar to one of
\ourbm's simulated datasets.  They also benchmark their calls against the union
of SVs called from three earlier studies on a real human, without quantifying
the inherent uncertainty in these calls or reconciling differences amongst
them. \citeappendix{David13} also proposes the simulation of a fake reference,
though in the context of parameter selection for their algorithm for Alu
detection.

The Genome in a Bottle Consortium (GBC) \citepappendix{Zook11} has similar
goals as \ourbm, though they have taken a different approach, as they evaluate
end-to-end pipelines starting from DNA, target accuracy metrics, and
perform evaluations against one specific genome. The Genome Comparison and
Analytic Testing resource (GCAT) \citepappendix{GCAT} in part leverages the
validation set from GBC, and performs evaluation of both alignment and variant
calling methods using consensus-based and crowdsourced evaluation.

\citeappendix{DePristo11} evaluate GATK SNP calling
by measuring concordance with HapMap3 \citepappendix{Consortium10} and 1000Genomes
\citepappendix{1kGenomes10}, and by measuring known/new and Ti/Tv ratios.  HapMap SNPs
\citepappendix{Consortium10, Frazer07, Consortium05} are subjected to extensive quality
assurance, mainly by measuring consensus across different technologies (SNP
chips, PCR sequencing for the ENCODE project, and SNPs from fosmid end
sequencing \citepappendix{Kidd08}), along with a battery of quality control filters
largely specific to the technologies involved.

Additional studies have proposed benchmarking techniques for related, yet
distinct, problems.  The authors of FRCbam \citepappendix{Vezzi12} propose methods to
assess overall de novo assembly quality and correctness, thus addressing a
different problem than the variant calling benchmarking problem that we tackle.
The authors of MuTect \citepappendix{Cibulskis13}, a tool for detecting somatic SNPs in
tumor cells, eschew simulated data in favor of benchmarking using downsampling
(randomly discarding reads from previous validated datasets to obtain desired
coverage) and 'virtual tumors' (sequencing the same sample twice to test false positive rate, 
and artificially adding SNPs to real reads to test false negative rate).  These techniques are
interesting, though they are more applicable to the tumor-normal SNP problem
than that of detecting germline mutations (in particular, for germline indels
and SVs, downsampling is seldom necessary and artificially altering reads to
create SV signatures would be more challenging than doing so for SNPs).

Various studies have also publicly released experimentally validated variant
calls for a variety of genomes.  These studies provide potentially rich
sources of data that could be useful for benchmarking, but the error rates of
these variants cannot be easily quantified,
and in the case of indels
and SVs, most breakpoints are not called to nucleotide resolution. Hence we
chose to not incorporate these datasets in the initial version of \ourbm.  Just
a few examples of such validation data can be found in the following
references: SNPs: HapMap3 \citepappendix{Consortium10, Mccarroll08}; indels:
\citeappendix{Mills11_insdel}; SVs: \citeappendix{Mccarroll08, Mills11_CNV, Mills11_insdel};
SNPs/indels/SVs: \citeappendix{Kidd08}, 1000Genomes \citepappendix{1kGenomes10}.



\section{Data Preparation}
\subsection{Synthetic Datasets}

We derive our synthetic datasets from J. Craig Venter's genome (HuRef).  We did
not use the HuRef sequence directly due to the difficulty of computing a most
parsimonious VCF ``diff'' between two full-genome assemblies; optimal alignment
algorithms do not scale well enough, and approximate alignment algorithms would
not be suitable for benchmarking.  Instead, we create the sample genome by
applying the HuRef variants provided by \citeappendix{Levy07} after `lifting
over' \citepappendix{Hinrichs06} from hg18 to hg19. This genome features 0.8
million indels and structural insertions and deletions, and 90 inversions.
Since all variants in this set come from \citeappendix{Levy07}, this set lacks
much of HuRef's structural variation \citepappendix{Pang10}.  

We generate reads from the sample genome using simNGS with its default
settings.  After invoking simNGS, each simulated read is annotated with its
associated location in the reference, as well as its location of origin in the
sample genome. 

To contaminate the Venter reads (Section~\ref{ssec:corruption_details}), we
introduced simulate reads from an approximation of James Watson's genome. We
used the variants of dbSNP population 12269,
found in Watson, with additional random SNPs sprinkled in, with
heterozygous/homozygous ratio 1.2 \citepappendix{Levy07} and Ti/Tv ratio 2.1, yielding
a 90/10 known/new ratio.  We also introduced random structural insertions and
deletions of length 1000, yielding 20 megabases of structurally variant
sequence \citepappendix{Pang10}.



\subsection{Mouse Datasets}
\label{app:mouse_prep}
The mouse datasets leverage existing mouse genomic data associated with the
canonical mouse reference as well as from the Mouse Genomes Project (MGP).  Our
first dataset is constructed as follows:
\begin{enumerate}
\item Use the canonical homozygous mouse reference as a sample, not as a
reference.  The canonical mouse reference \citepappendix{Church09} (known as `mm10') is
derived from the homozygous mouse strain C57BL/6J (abbreviated `B6').  In
\ourbm, we use mm10 as an extremely well-sequenced sample.  
\item Leverage existing short-read data for the B6 strain \citepappendix{Gnerre11}. 
Compared with the synthetic and human reads, the mouse reads have a relatively
high error rate, as discussed in more detail in Section~\ref{sec:align}, but
given their high coverage 
they are still useful for evaluation of variant calling algorithms.
There are ten libraries of paired-end Illumina reads, all from a B6 female.  In
this paper we consider only the highest-coverage library: 101bp
overlapping ends with 58.6x coverage and mean fragment size of 168.

\item Create a reference by applying variation from the Mouse Genomes Project
(SNPs, indels, SVs). We leverage a rough set of variants for the DBA mouse
strain that are called relative to the mm10 reference. In particular, we use
variation from the Mouse Genomes Project consisting of 28K structural variants
(SVs), consisting of insertions, deletions, and inversions; 0.9 million indels;
and 5.6 million SNPs. After `lifting over' these variants from mm9 to mm10, we
create a `fake reference' in which DBA variants are injected into the mm10
reference, and also generate a VCF file storing mm10 variants relative to this
`fake reference.' 
\end{enumerate}

Moreover, our second mouse dataset is a contaminated version of this initial
dataset, created from the mouse reads as well as corresponding human reads.
The human reads, for NA12878, were generated by the same sequencing methodology
\citepappendix{Gnerre11}, with fragment size mean and standard deviation 155 and 26,
versus 168 and 32 for the mouse.  See Section~\ref{ssec:corruption_details} in
the Supplementary Material for further details.

\paragraph{\textbf{Detailed Validation Error Profile:}}
The first potential source of error is the mm10 assembly itself.  On the whole
it is ``of high fidelity and completeness, and its quality is comparable to, or
perhaps better than, that of the reference human genome assembly''
\citepappendix{Church09}.
\citeappendix{Church09} and \citeappendix{Quinlan10} discuss how the mouse reference is
imperfect in the $5\%$ of the genome that consists of segmental duplication.
However, our validation data (DBA variation extracted from MGP) contain no
variants in these highly repetitive regions, and thus only precision
(see Section~\ref{ssec:metrics} for discussion of evaluation metrics) could
potentially be affected by these duplications.


Another possible source of error is that the individual from which reads were
taken may be slightly genetically different from the individual from which mm10
was constructed, despite both belonging to the inbred strain B6
\citepappendix{Watkins08}.  We provide back-of-the-envelope upper bounds for the number
of SNP, indel, and SV differences, by bounding both the number of intervening
generations and the mutation rate per generation.  To bound the former, note
that both mice came from the Jaxson laboratory, whose B6 stock at time of
writing \citepappendix{Jackson} is 235 generations from strain origin.  Since the
strain is maintained by sibling mating, and the two mice are far more recent
than strain origin, then conservatively speaking, there have been
fewer than 470 opportunities for generational mutation to strike.  The SNP
mutation rate is about $10^{-8}$ per generation \citepappendix{Egan07}.  To
estimate mutation rates for indels and SVs, we use the approximation of
\citeappendix{Cartwright09} that for every 100 SNPs, there are at most 16 indels or
structural insertions or deletions, whose lengths follow a zeta distribution
with parameter roughly $1.7$.  Multiplying the resulting mutation rates, the
generational gap, and the size of mm10 yields
upper bounds of 13k SNPs, 2k indels, and 93 structural insertions and
deletions (corresponding to $0.2\%$ error rates for SNPs and indels, and a
$0.3\%$ error rate for SVs), affecting the reads in our mouse dataset but not
present in the validation data.

\subsection{Sampled Human Datasets}
\label{app:human_prep}

\paragraph{\textbf{Detailed Validation Error Profile:}}

For our validated SNPs, we have indirect evidence of the error rates of
each of the two chip technologies.  Perlegen has a 0.51\% rate of discrepancy
against other HapMap2 calls (Supplementary text 2 of \citeappendix{Frazer07}).
Furthermore, \citeappendix{Oliphant02} claim that for SNP chips such as BeadArray,
``genotyping should have accuracy above $99\%$''.  Therefore, even with a
conservative assumption of $2\%$ error rates, and assuming that the errors from
the two chips are independent (reasonable since the two chips employ different
strategies), our validated SNP data would have error rate $0.04\%$.


Moreover, since the SVs we parsed were called from finished fosmid sequence,
understanding the potential sources of error requires understanding the
intricate, partially manual process that generated these calls \citepappendix{Kidd08,
Kidd10insights}.  A fosmid is a $\sim$40kb fragment of DNA.  The Human Genome
Structural Variation Project \citepappendix{Waterston07} provides libraries of
end-sequenced fosmids for 22 individuals.  The fosmid ends, each corresponding
to a Sanger read, were mapped to the human reference genome.  Based on this
mapping, each fosmid was classified as \emph{concordant} or \emph{discordant},
the latter meaning potentially harboring a SV, for instance due to the ends
being close together, far apart, or in opposing orientation.  Imprecise SVs
were called from clusters of discordant fosmids. 
Selected discordant fosmids contributing to imprecise SV calls were fully
Sanger sequenced, and some of these fully sequenced fosmids were analyzed for
SV breakpoints. 
There are four possible sources of error in the final, precise SV calls:
\begin{enumerate}
  \item Calling imprecise SVs. Three measures were taken to ensure reliability
of these imprecise calls \citepappendix{Kidd08}.  First, discordant fosmids had to pass
a series of stringent mapping criteria.
Second, fosmids in a cluster had to show the same type of discordancy.  Third,
putative SV sites were orthogonally validated, e.g., with complete restriction
enzyme digests, microarrays, or trio testing.  \footnote{An even more reliable
dataset could be constructed by restricting attention to the approximately 400
SV sites that overlap with sites found by an orthogonal technology, arrayCGH
\citepappendix{Kidd08}.} 
\item End mapping of a discordant fosmid that is fully sequenced. Given the
existence of multiple discordant fosmid ends supporting a particular imprecise
SV, an incorrect mapping would require multiple SVs having nearly identical
flanking sequences.  However, this is unlikely given that the vast majority of
SVs have at least 20kb flanking sequences (and the shortest flanking sequence
was 8kb).  Additionally, an incorrect mapping would likely cause the breakpoint
calling step to fail, which \citeappendix{Kidd10insights} indeed reports to occur,
mainly in repetitive regions.
  \item Full sequencing of the discordant fosmid.  The full sequencing was
performed according to the protocols of the Human Genome Project, with error
rate pessimistically 1 in 10kb \citepappendix{IHGSC2004}.
  \item Breakpoint calling from the full sequence. Breakpoints were called via
human-guided inspection refined by optimal alignment \citepappendix{Kidd10insights}.
The main issue here is alignment ambiguity [Jeff Kidd, personal communication],
but this is not a concern to us, as we present SVs in in their full,
``ambiguous'' form, e.g., TAGCATTAG $\rightarrow$ TAG.
\end{enumerate}
We conclude that despite the complexity of their generation, the SV calls we
use are highly reliable.  Though we have not performed a full probabilistic
analysis, an error rate of $1\%$ seems conservative.

\subsection{Contaminated Datasets}
\label{ssec:corruption_details}
One common source of noise in short-read datasets are reads from a separate
genome, as might result from impurities introduced in sample preparation or
improperly sterilized sequencing machines. In order to benchmark algorithms against this type
of data, we created contaminated sets of short reads through a script that
walks the target and contaminate set of reads and outputs the contaminate read
at a specified likelihood and the target read otherwise. The contaminate reads
should have length and insert size as close as possible to the original reads;
ideally they should be from the same sequencer, both to simulate real-world
contaminated data and to prevent the aligner from filtering out obviously
unmatching reads. We chose $10\%$ as the likelihood of contamination for all
three of our contaminated datasets. The script used to create these datasets is
available to download.

\section{Recall and Precision}
\label{sec:rec_prec}
We now discuss in more detail our motivation for focusing on recall and
precision in the context of variant calling validation.  Consider the problem
of evaluating a variant caller against a ground truth validation set including
known variants and known locations lacking variation.
In this setting, each called variant is positive or negative (depending solely
on the caller) and true or false (depending also on the validation dataset), so
we can naively measure performance via four metrics -- true positives (TP), false positives (FP), true
negatives (TN) and false negatives (FN).  
However, it would be preferable to use a more succinct set of metrics by
distinguishing the behavior of the variant calling algorithm from intrinsic
properties of the validation dataset. 


To find a convenient parametrization of this four dimensional space, first
note that the total numbers of variant presences ($\present = \TP + \FN$) and
absences ($\absent = \FP + \TN$) are intrinsic to the dataset, i.e.,
independent of the caller. Proposition~ \ref{prop:coordinate_change} uses these
definitions and provides a convenient reparametrization.

\begin{prop}
\label{prop:coordinate_change}
  No information is lost in the coordinate change
  \[ \{\TP, \TN, \FP, \FN\} \to
  \{\present, \absent, \recall, \precision\}\,.
  \]
\end{prop}
\begin{proof}
  The inverse transformation is readily verified to be
\begin{align*}
  \TP &= \recall \cdot \present \\
  \FN &= (1 - \recall) \cdot \present \\
  \FP &= \recall \cdot \present \cdot \left(\frac{1}{\precision} - 1\right) \\
  \TN &= \absent - \FP \qedhere \,.
\end{align*}
\end{proof}

Indeed, $\present$ and $\absent$ are intrinsic to the dataset, that is,
independent of the caller, and the space of caller behaviors on a particular
dataset can be expressed via the very widespread notions of recall and
precision.

\section{Noisy Validation Data Bounds}
\label{sec:noisy_validation_bounds}
In this section we present details of our analysis of noisy validation data.
We wish to derive bounds that will hold regardless of whether we are
considering SNPs, indels, or SVs, whether we are considering zygosity,
considering insertion sequence in addition to insertion length, etc.  To
abstract away these details, we assume that there is some set of possible
variants, at most one of which is actually present at each site in the
reference genome.  Then there are three kinds of validation error, at most one
of which may occur at a given site: a validated variant where no variant
actually exists, a validated variant at the site of a different actual variant,
and for a comprehensive dataset, the lack of a validated variant at a site
harboring an actual variant. 

\subsection{Simplified Binary Validation Error Setting}

Considering all three types of validation errors presents a fairly complex
combinatorial problem. In order to provide intuition for this problem, we first
consider the binary case, i.e., we assume that the set of possible variants at
each site has size 1.  This scenario is not directly applicable to variant
calling as it rules out the second kind of validation error described above,
but involves fairly straightforward arguments, and provides intuition for
subsequent results.  We present bounds on recall and precision for this setting
in Proposition~\ref{prop:binary_error}.

\begin{prop}
\label{prop:binary_error}
In the binary setting, the following bounds hold for recall and precision.
  
{Case 1: Positive Validation Only}:
\[
  \frac{\TP - E}{V - E} \leq \recall \leq \frac{\TP}{V - E} \,.
  \]
{Case 2: Positive and Negative Validation}:
  \begin{equation}
    \label{eq:precision_bounds}
  \frac{\TP - E}{P} \leq \precision \leq \frac{\TP + E}{P}
  \end{equation}
  \[ \recall \leq
  \begin{cases}
    \frac{\TP}{V - E} &\mbox{if } \FN \leq \TP + E \\
    \frac{\TP + E}{V + E} &\mbox{otherwise,}
  \end{cases}
  \]
  \[ \recall \geq
  \begin{cases}
    \frac{\TP}{V + E} &\mbox{if } \FN \leq \TP - E \\
    \frac{\TP - E}{V - E} &\mbox{otherwise.}
  \end{cases}
  \]
  Note that if $\FN \leq \TP - E$, the recall bounds simplify to
  \[
  \frac{\TP}{V + E} \leq \recall \leq \frac{\TP}{V - E}
  \]
\end{prop}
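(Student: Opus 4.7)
The plan is to parametrize the unknown validation errors and write the true $\TP, \FP, \FN, \TN$ counts as explicit functions of the observed counts. In the binary setting each validation error is of one of two types: (i) a spurious validated variant at a site with no true variant, or (ii) a missed true variant that carries no validated label. Let $e_1, e_2$ count the two types, so $e_1 + e_2 = E$ (with $e_2 = 0$ in Case~1). I would refine this split further by the caller's classification, writing $e_1 = e_1^+ + e_1^-$ and $e_2 = e_2^+ + e_2^-$, where the sign records whether the caller called that site positive or negative. Then the truth satisfies $\TP^\star = \TP - e_1^+ + e_2^+$, $\FN^\star = \FN - e_1^- + e_2^-$, and $\present^\star = V - e_1 + e_2$, while $P = \TP + \FP$ depends only on the caller and is invariant. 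The feasible region is the box $e_i^\pm \geq 0$ bounded above by the matching observed count, subject to the sum constraints on $e_1, e_2$.

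For the precision bounds in Case~2, note that $P$ is fixed and precision is monotone in $\TP^\star$, so the allocations $(e_1^+, e_2^+) = (E, 0)$ and $(0, E)$ immediately yield the extremes $(\TP - E)/P$ and $(\TP + E)/P$. For the recall bounds in Case~1, setting $e_2 = 0$ pins $\present^\star = V - E$ deterministically, so only the numerator $\TP^\star = \TP - e_1^+$ need be optimized; letting $e_1^+$ sweep $[0, E]$ recovers $[\TP - E, \TP] / (V - E)$ as claimed.

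The Case~2 recall bounds are the technical heart. With both error types present, $\present^\star = V + E - 2e_1$ and $\TP^\star \in [\TP - e_1,\, \TP + E - e_1]$ for each fixed $e_1 \in [0, E]$. I would first extremize the numerator at a corner of the inner box, reducing to a one-variable problem in $e_1$, then differentiate. For the upper bound, taking $(e_1^+, e_2^+) = (0, E - e_1)$ gives $\TP^\star = \TP + E - e_1$, and differentiating $(\TP + E - e_1)/(V + E - 2e_1)$ in $e_1$ produces a derivative whose numerator equals $\TP + E - \FN$. Hence the optimum lies at $e_1 = E$ when $\FN \leq \TP + E$, giving $\TP/(V - E)$, and at $e_1 = 0$ otherwise, giving $(\TP + E)/(V + E)$. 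A symmetric computation with $\TP^\star = \TP - e_1$ (from $(e_1^+, e_2^+) = (e_1, 0)$) yields the lower-bound case split at $\FN = \TP - E$.

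The principal obstacle I anticipate is verifying feasibility of the extremal allocations at the boundary values of $e_1$, for instance that $e_1^+ = E$ requires $E \leq \TP$ and $e_2^+ = E - e_1$ requires $E - e_1 \leq \FP$. In the natural regime where $E$ does not exceed any of $\TP, \FN, \FP, \TN$ (outside of which the bounds are trivial or vacuous) the box constraints are slack at each optimizer and the derivative-sign analysis alone suffices. Otherwise the stated bounds still hold, with the usual clipping of $\TP^\star$ to a nonnegative value, and I would dispatch these edge cases by direct monotonicity of the recall objective in the relevant free variables.
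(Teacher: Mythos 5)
Your proposal is correct and follows essentially the same route as the paper's proof: both reduce the problem to allocating the $E$ errors between the two relevant transition types, substitute the sum constraint to get a one-variable linear-fractional objective, and observe that its derivative vanishes only when $\FN = \TP + E$ (resp.\ $\FN = \TP - E$), forcing the optimum to an endpoint and yielding the two-case bounds. Your extra attention to feasibility of the extremal allocations (e.g.\ $e_1^+ = E$ requiring $E \leq \TP$) is a minor refinement the paper's proof silently skips, but it does not change the argument.
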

\begin{proof}
\label{proof:binary_error}
  In the case of only positive validated labels, each validation error either
  goes $\TP \to \FP$ or $\FN \to \TN$, yielding the claimed bounds.

  For positive and negative validation data, there are four kinds of validation error:
  $\TP \leftrightarrow \FP$ and $\FN \leftrightarrow \TN$.  The bound on
  precision is immediate.  For recall, first consider finding the upper bound.
  The two kinds of validation error to consider here are $\FP \to \TP$ and $\FN
  \to \TN$.  Let $a$ be the number of validation errors of the first kind, and
  $b$ of the second.  Then
  \[
  \recall \leq \frac{\TP + a}{\TP + a + \FN - b}
  \]
  Eliminating the extraneous variable $b = E - a$, the derivative of this bound
  with respect to $a$ vanishes if and only if $\FN = \TP + E$.  Therefore, the
  bound is maximized at either $a = 0$ or $a = E$, corresponding to the two
  cases
  \[ \recall \leq
  \begin{cases}
    \frac{\TP}{V - E} &\mbox{if } \FN \leq \TP + E \\
    \frac{\TP + E}{V + E} &\mbox{otherwise.}
  \end{cases}
  \]
  Computing the lower bound for recall is completely analogous, with $-E$
  playing the role of $E$.
\end{proof}

\subsection{Proof of Proposition~1}
Proposition~1 considers the realistic non-binary setting.
The resulting bounds look similar to those of
Proposition~\ref{prop:binary_error}, but require more involved analysis as we
deal with all three types of validation errors.  The complication of the
non-binary case is the notion of \emph{site}.  Each site harbors at most one
true variant, at most one call, and many true negatives, which we ignore
because they are invisible to the metrics of recall and precision.  There are
five kinds of sites:
  \begin{itemize}
  \item hits $\TP$ where the algorithm calls correctly;
  \item miscalls $M$, harboring $\FP$ and $\FN$, where there is a true variant
    but the algorithm calls a different variant;
  \item wrong silences $S$, harboring $\FN$, where the algorithm fails to make
    a call when it should;
  \item false alarms $L$, harboring $\FP$, where the algorithm calls but there
    is no true variant;
  \item typical sites $G$, with no call and no true variant.
  \end{itemize}
  Note that sites harboring true variants fall under $\TP$, $M$, or $S$; sites
  lacking true variants fall under $L$ or $G$, and are only relevant in the
  case of positive and negative validation data.

  There are three kinds of validation error:
  \begin{itemize}
  \item A validated variant where no variant actually exists.  This can effect
    $\TP \to L$, transferring $\TP \to \FP$; or $M \to L$ or $S \to G$,
    decreasing $\FN$.
  \item A validated variant at the site of a different actual variant.  This
    transfers $\TP \leftrightarrow M$, trading between $\TP$ and the
    combination of $\FP$ and $\FN$.
  \item For positive and negative validation data, the lack of a validated
  variant at a site harboring an actual variant.  This is merely the opposite
  of the first kind of validation error, either transferring $\FP \to \TP$ or
  increasing $\FN$.
  \end{itemize}

  We see immediately that error bounds for precision are the same as the binary
  case, as the only possible perturbations are $\TP \leftrightarrow \FP$; one
  direction also increases $\FN$ but this is not relevant to precision.

  Next consider the best-case upper bound of recall.  There are three ways
  validation error could increase recall:
  \begin{itemize}
  \item $M \to \TP$;
  \item $M \to L$ or $S \to G$;
  \item in the positive and negative validation case, $L \to \TP$.
  \end{itemize}
  Letting $a$, $b$, and $c$ be the number of validation errors of
  those three sorts, respectively, we have
  \[
  \recall \leq \frac{\TP + a + c}{V - b + c}
  \]
  In the best case, $c = 0$, so the upper bound will be the same for the
  positive and negative case as the positive-label case.  Since in the best case $a + b
  = E$, the upper bound is
  \[
  \recall \leq \frac{\TP + a}{V -E + a}
  \]
  with $0 \leq a \leq E$.  Since the derivative with respect to $a$ vanishes if
  and only if $E = \FN$, the maximum occurs at $a = E$ or $a = 0$,
  corresponding to the two claimed cases of the upper bound.

  Computing the worst-case lower bound of recall is similar, but shakes out a
  bit differently.  There are three ways validation error could decrease
  recall:
  \begin{itemize}
  \item $\TP \to M$;
  \item $\TP \to L$;
  \item in the positive and negative case, $L \to M$ or $G \to S$.
  \end{itemize}
  Letting $a$, $b$, and $c$ be the number of validation errors of those three
  sorts, respectively, we have
  \[
  \frac{\TP - a - b}{V + c - b} \leq \recall
  \]
  In the worst case, $b = 0$, and we obtain the desired lower bound in the case
  of only positive labels, where $c = 0$.  In the case of positive and negative 
  validation data, in the worst case $a + c = E$, yielding
  \[
  \frac{\TP - a}{V + E - a} \leq \recall,
  \]
  which gives us the desired lower bound.

\begin{figure}[h]
\renewcommand*\footnoterule{}
\centering
\begin{minipage}[t]{.5\textwidth}
\begin{algorithm}[H]
   \caption{\ourbm's Ambiguity Resolution Algorithm}
   \label{alg:ambiguity_algorithm}
\begin{algorithmic}
   \STATE {\bfseries Input:} Reference ($ref$), True VCF ($vcfTr$),
   Predicted VCF ($vcfPr$), Rescue window ($win$) \\
   \STATE {\bfseries Output:} Recall ($recall$), Precision ($prec$) \\
   \vspace{2mm}
	 \STATE $vcfPr$ = \textsc{CleanAndLeftNormalize}($vcfPr$) \\
	 \STATE $vcfTr$ = \textsc{CleanAndLeftNormalize}($vcfTr$) \\
	 \STATE $errors$ = \textsc{Compare}($vcfTr$, $vcfPr$) \comment{bleh} \\
   \STATE $rescued$ = \rescue($ref$, $vcfTr$, $vcfPr$, $errors$, $win$) \\
   \STATE $errors$ = $errors$ - $rescued$ \\
   \STATE $recall$, $prec$ = \textsc{ComputeResults}($vcfTr$, $errors$) \\
\end{algorithmic}
\end{algorithm}

\begin{algorithm}[H]
   \caption{\rescue}
   \label{alg:rescue}
\begin{algorithmic}
   \STATE {\bfseries Input:} Reference ($ref$), True VCF ($vcfTr$), Predicted
   VCF ($vcfPr$), Validation errors ($errors$), Rescue window ($win$) \\ 
   \STATE {\bfseries Output:} Set of rescued errors ($rescued$) \\ 
   \vspace{2mm}
   \STATE $rescued$ = $[\,]$ \\ 
   \textbf{for} $err$ \textbf{in} $errors$ 
   \STATE\hspace{2mm} $strTr$ = \textsc{ExpandAroundWindow}($ref$, $vcfTr$, $err$, $win$) \\
   \STATE\hspace{2mm} $strPr$ = \textsc{ExpandAroundWindow}($ref$, $vcfPr$, $err$, $win$) \\
   \STATE\hspace{2mm} \textbf{if} $strTr == strPr$ \\
   \STATE\hspace{5mm} $rescued$.append($error$) \\
\end{algorithmic}
\end{algorithm}

\comment{
\begin{algorithm}[H]
   \caption{\rescuega}
   \label{alg:rescue}
\begin{algorithmic}
   \STATE {\bfseries Input:} Reference ($ref$), True VCF ($vcfTr$), Predicted
   VCF ($vcfPr$), Validation errors ($errors$), Rescue window ($window$) \\ 
   \STATE {\bfseries Output:} Rescued errors ($rescued$) \\ 
   \vspace{2mm}
   \STATE $rescued$ = $\{\,\}$ \\ 
   \textbf{for} $error$ \textbf{in} $errors$ 
   \STATE\hspace{5mm} $seqHomTr, seqHetTr$ = \\
   \STATE\hspace{8mm} \textsc{ExpandErrorByGT}($ref$, $vcfTr$, $error$, $window$) \\
   \STATE\hspace{5mm} $seqHomPr, seqHetPr$ = \\
   \STATE\hspace{8mm} \textsc{ExpandErrorByGT}($ref$, $vcfPr$, $error$, $window$) \\
   \STATE\hspace{5mm} \textbf{if} $seqHomTr = seqHomPr$ \textbf{and} $seqHetTr = seqHetPr$ \\
   \STATE\hspace{10mm} $rescued$.add($error$) \\
\end{algorithmic}
\end{algorithm}
}

\end{minipage}
\end{figure}

\begin{table*}[ht!]
  \centering
  \small{
  \caption{Effect of ambiguity resolution on benchmarking GATK and mpileup on
  SNPs using the mouse dataset. The results illustrate the impact of each successive
  step of resolution, namely, cleaning, left normalization and rescuing.}
  \label{tab:ambiguity_mouse_extra}
  \begin{tabular}[r]{c|cc||cc}
  \hline
   & \multicolumn{2}{c||}{mpileup} & \multicolumn{2}{c}{GATK} \\
  \hline
   Strategy & Pre & Rec & Pre & Rec \\ 
  \hline
  Cleaning & 98.4 $\pm$ 0.3 & 87.2 $\pm$ 0.2 & 98.1 $\pm$ 0.2 & 93.5 $\pm$ 0.2 \\ 
 Normalization & 98.4 $\pm$ 0.3 & 87.2 $\pm$ 0.2 & 98.2 $\pm$ 0.2 & 93.5 $\pm$ 0.2 \\ 
 \rescue & 98.4 $\pm$ 0.3 & 87.3 $\pm$ 0.2 & 98.3 $\pm$ 0.2 & 94.7 $\pm$ 0.2 \\
    \multicolumn{5}{c}{}\\
\end{tabular}
}
\end{table*}

\begin{table*}[ht!]
  \centering
  \small{
  \caption{Effect of ambiguity resolution on benchmarking GATK, mpileup and Pindel on
  SNPs and indels using the Venter dataset. The results illustrate the impact of each successive
  step of resolution, namely, cleaning, left normalization and rescuing. Error bounds are
  excluded since there is no uncertainty in the Venter validation data. Top: SNPs,
  Bottom: Indels.}
  \label{tab:ambiguity_venter}
  \begin{tabular}[r]{c|cc||cc}
  \hline
   & \multicolumn{2}{c||}{mpileup} & \multicolumn{2}{c}{GATK} \\
  \hline
   Strategy & Pre & Rec & Pre & Rec \\ 
  \hline
 Cleaning & 96.9 & 97.0 & 97.4 & 91.5 \\ 
 Normalization & 96.9 & 97.0 & 97.4 & 91.5 \\ 
 \rescue & 98.7 & 97.0 & 99.3 & 91.7 \\
    \multicolumn{5}{c}{}\\
  \end{tabular}
  \begin{tabular}[r]{c|cc|cc||cc|cc||cc|cc}
  \hline
   & \multicolumn{4}{c||}{mpileup} & \multicolumn{4}{c||}{GATK} & \multicolumn{4}{c}{Pindel} \\
   & \multicolumn{2}{c}{Insertions} & \multicolumn{2}{c||}{Deletions} & \multicolumn{2}{c}{Insertions} & \multicolumn{2}{c||}{Deletions} & \multicolumn{2}{c}{Insertions} & \multicolumn{2}{c}{Deletions}\\
  \hline
   Strategy & Pre & Rec & Pre & Rec & Pre & Rec & Pre & Rec & Pre & Rec  & Pre & Rec\\  
  \hline
  Cleaning & 73.2 & 5.1 & 78.9 & 6.5 & 14.7 & 13.3 & 15.8 & 14.3 & 10.6 & 7.5 & 12.7 & 9.7 \\ 
 Normalization & 84.5 & 69.7 & 90.7 & 74.2 & 87.5 & 86.3 & 89.8 & 89.2 & 92.3 & 71.2 & 93.8 & 78.3 \\ 
 \rescue & 85.8 & 70.9 & 91.3 & 75.2 & 90.6 & 88.1 & 92.3 & 90.9 & 92.7 & 72.1 & 94.1 & 79.2 \\
  \end{tabular}
}
\end{table*}

\section{Ambiguity Resolution Details}
\label{sec:ambiguity_details}
We now describe the details of the ambiguity resolution algorithm we use when
benchmarking variant callers. Input VCF files are first cleaned (removing
homozygous reference calls, and calls where the reference and alternate alleles
matched) and left-shifted. Next, we perform an initial evaluation in which we
loop through the called variants in the predicted and ground truth VCFs.
SNPs and short insertions and deletions are strictly compared. Specifically, 
\begin{itemize}
\item calls where position and alleles match identically are marked as true
positives;
\item calls in the ground truth VCF but not the predicted VCF, either due to
mismatching alleles or no corresponding variant in the predicted VCF, are
marked as false negatives; and 
\item variants in the predicted VCF that have no
matching call in the truth VCF are marked as false positives. 
\end{itemize}
Because there are several ways to represent a mutation or set of mutations, we
subsequently run the \rescue algorithm in an attempt to match false negatives with false
positives within a small window.  Algorithm~\ref{alg:ambiguity_algorithm}
summarizes \ourbm's ambiguity resolution procedure while
Algorithm~\ref{alg:rescue} outlines the steps of the \rescue procedure. 

Due to the symmetry of the problem (an allele which is equivalent in the two
sets will generate both false positives and false negatives), we need only
examine one of these sets. Because it is possible to represent the reference
allele itself by a combination of variants which cancel out (for instance an
insertion followed by a deletion), we penalize this potential behavior by
attempting to rescue only putative
false negatives. Tables~\ref{tab:ambiguity_mouse_extra} and
\ref{tab:ambiguity_venter} provide additional evidence (supporting
Table~3 in the main text) of the impact of these three
ambiguity resolution steps on variant calling accuracy.

We take the following approach in \rescue for dealing with overlapping alleles
within each set (i.e. several overlapping FPs). First, we exclude alleles that
have the same starting position; when presented with such a site we select the
first of these alleles. Second, for alleles that overlap,
all possible combination of variants within the window are checked
for equivalence, up to a limit of 16. For instance, consider a 100bp window
  containing both two overlapping deletions (\texttt{var1} and \texttt{var2})
  and a SNP that overlaps a deletion (\texttt{var3} and \texttt{var4}). In
  order to test if these variants result in the same nucleotide sequence as
  those in the truth VCF, we must construct a reference string from them;
  but the overlaps render this impossible. Instead, in this example we generate
  four sequences: those resulting from the pairs (\texttt{var1},\texttt{var3}),
  (\texttt{var2},\texttt{var3}), (\texttt{var1},\texttt{var4}), and
  (\texttt{var2},\texttt{var4}).  At a window, each VCF generates a set of
  sequences in this way, and all pairs of nucleotide sequences are checked for
  identity. The first pair of sequences which is identical between the two VCFs
  is marked as a match. The false negatives giving rise to the true sequence
  become true positives, and the false positives giving rise to the comparison
  sequence are removed from accounting.

\begin{table}[ht]
  \caption{Computational performance for BWA alignment.}
  \centering
  \label{tab:align}
  \begin{tabular}[r]{c|rr}
  dataset & \multicolumn{2}{c}{BWA} \\
  \hline
    & Hours & Cost \\
  \hline
  Venter & $60h$ & $\$150$ \\
  contam. Venter & $23h$ & $\$57$ \\
  NA12878 & $52h$ & $\$130$ \\
  contam. NA12878 & $45h$ & $\$112$ \\
  NA18507 & $37h$ & $\$92$ \\
  NA19240 & $36h$ & $\$90$ \\
  mouse & $106h$ & $\$265$ \\
  contam. mouse & $77h$ & $\$192$
  \end{tabular}
  \end{table}

\section{Alignment Statistics}\label{sec:align}

We performed alignment on all benchmarking datasets to evaluate different
variant callers in Section~\ref{section:evaluation}.  All our alignments are
performed using BWA 6.1 \citepappendix{LiDurbin09} with default settings,
except with trimming parameter 15.  Table~\ref{tab:align} presents the time and
cost to run BWA on AWS for each dataset.  The resulting BAM files are available
at \ourwebsite.  

We now provide a summary of the alignment results, starting with
Table~\ref{tab:align-stats}, which presents alignment results for the eight
datasets we consider here.  We first restrict attention to reads aligning
uniquely,\footnote{We define a read as ``aligning uniquely'' if it has a single
best alignment, as opposed to zero or multiple alignments.  We implement this
definition by excluding SAM flags ``Usfd'' (meaning unmapped, not primary, QC
failure, and optical or PCR duplicate) and ``XT'' tag values ``Repeat'' and
``N'' (meaning not mapped).} yielding the first column of
Table~\ref{tab:align-stats}.  The final three columns indicate the percentage
of reads that align uniquely with edit distance at most 5, 2, or 0. 
We also present alternative alignment statistics in
Table~\ref{tab:quality-align-stats} by restricting attention to reads whose
quality scores, after trimming, average 25 and have no more than two ``!''
values.  Note that the mouse (Section~\ref{subsec:mouse}) reads have the worst
alignment statistics, but the statistics improve dramatically after we perform
quality filtering and presumably each variant calling algorithm performs a
similar type of filtering before processing the reads. 

Note that the percentage of reads aligned for Venter and NA12878 and their
corresponding contaminated datasets are almost identical. However, fewer of the
contaminated mouse dataset reads were aligned than in the mouse dataset; since
the contaminate reads were from a different species, the aligner was evidently
better able to filter many of them out.

\begin{table}[ht!]
 \centering
 \caption{Alignment statistics for BWA.}\label{tab:align-stats}
 \begin{tabular}[c]{l|r|r|r|r}
   dataset & aligned & $\leq 5$ & $\leq 2$ & $= 0$ \\
   \hline
   Venter & 88.0 & 99.6 & 94.5 & 60.4 \\
   contam. Venter & 87.8 & 99.7 & 94.7 & 60.8 \\
   NA12878 & 91.3 & 99.6 & 97.7 & 83.8 \\
   contam. NA12878 & 90.9 & 99.6 & 97.8 & 80.9 \\
   NA18507 & 94.4 & 99.6 & 96.9 & 52.3 \\
   NA19240 & 86.2 & 99.5 & 95.7 & 68.8 \\
   Mouse & 76.2 & 99.6 & 93.1 & 66.8 \\
   contam. Mouse & 68.4 & 99.6 & 93.1 & 66.7
 \end{tabular}
\end{table}

\begin{table}[ht!]
 \centering\caption{Quality-filtered alignment statistics for BWA.}\label{tab:quality-align-stats}
 \begin{tabular}[c]{l|r|r|r|r}
   dataset & aligned & $\leq 5$ & $\leq 2$ & $= 0$ \\
   \hline
   Venter & 95.4 & 99.8 & 95.5 & 61.4 \\
   contam. Venter & 95.4 & 99.8 & 95.5 & 61.4 \\
   NA12878 & 94.8 & 99.6 & 97.9 & 84.5 \\
   contam. NA12878 & 94.7 & 99.7 & 98.0 & 81.5 \\
   NA18507 & 94.4 & 99.6 & 97.0 & 52.9 \\
   NA19240 & 94.0 & 99.6 & 96.8 & 71.9 \\
   Mouse & 86.8 & 99.6 & 93.4 & 68.6 \\
   contam. Mouse & 83.2 & 99.6 & 93.4 & 68.5

 \end{tabular}
\end{table}

\section{Experimental Details}
\label{sec:experimental_details}

In this section we describe the details of the experimental setup for the
results presented in Section~\ref{section:evaluation}.

\textbf{Computing Platform}: All experiments were run on Amazon AWS using
cc2.8xlarge instances (8 quad-core processors, 60.5 GB of RAM).  Genomics data
was stored on EBS and local (ephemeral) storage was configured in RAID0.  

\textbf{SAMtools mpileup}:
We used SAMtools version 0.1.19-44428cd, which we obtained from the SAMtools
sourceforge download
page.\footnote{\url{http://sourceforge.net/projects/samtools}}  We followed the
`basic Command line' pipeline described on the official mpileup
website,\footnote{\url{http://samtools.sourceforge.net/mpileup.shtml}} using
the mpileup option `-C50' as per the recommendation on the mpileup website.
We ran mpileup in parallel by chromosome, and concatenated the resulting VCFs
using the `vcf-concat'
command\footnote{\url{http://vcftools.sourceforge.net/perl_module.html#vcf-concat}}
from VCFtools (note that cc2.8xlarge Amazon instances support up to 32
threads, which is more than the number of chromosomes).

\textbf{GATK}: We ran GATK version 2.6, using the parameters and the pipeline
described by the GATK best practices
wiki.\footnote{\url{http://gatkforums.broadinstitute.org/discussion/15/retired-best-practice-variant}
\url{-detection-with-the-gatk-v3}}
The pipeline we executed used GATK's Queue interface to GridEngine, thus
ensuring that GATK used as many cores as possible (actual core usage varied
at different phases of the analysis due to the ways in which the various steps
of the pipeline parallelized).  

The final stage in the GATK best practices pipeline is to use known variants as
training data to establish the probability of each individual call's accuracy,
so that low-probability calls can be filtered out. For the synthetic and
sampled human datasets, we used the hg19 resources from the GATK resource
bundle.\footnote{\url{http://www.broadinstitute.org/gatk/guide/article?id=1213}} For
the mouse datasets, the dbSNP variants did not overlap enough with our
reference to provide a statistically significant prior, so we did not include
the variant recalibration step in our GATK mouse pipeline.

\textbf{Pindel}: We ran Pindel version 0.2.4w (May 31 2013) which we obtained
from the Pindel
repository.\footnote{\url{https://github.com/genome/pindel/tree/3790e78b969dca678896e6e98ec09b703d574f13}}
We directed Pindel to run on all chromosomes but otherwise used the default
parameters. Pindel appears to  become memory-bound when working on a
high-coverage sample.  In particular, on the NA12878 sample, we were unable to
run it with multiple threads on the cc2.8xlarge instances (which have only 60.5
GB of memory); however, on a separate machine with 128GB of memory, we were in
fact able to run it with multiple threads.  Hence, we did not run Pindel in
parallel for any of the datasets. Pindel ran for over 400 hours on the
contaminated Venter, contaminated NA12878 and mouse datasets; we terminated these
jobs before they completed.

\textbf{BreakDancer}:
We downloaded breakdancer-max version 1.2.6 (commit 83efb8e) from the
BreakDancer software
repository,\footnote{\url{http://gmt.genome.wustl.edu/breakdancer/1.2/install.html}}
and ran it with its default parameters.  In order to be consistent with our
experiments with Pindel, we chose not parallelize BreakDancer (though it is
indeed possible to do so by  scheduling distinct jobs for each chromosome,
along with a separate job to identify interchromosomal translocations).  
BreakDancer's output provides, for each SV, left and right breakpoints and a
length; we use the left breakpoint and length to convert to VCF. BreakDancer ran for over 
400 hours on the contaminated NA12878 dataset and was terminated before completion.

\textbf{Evaluation Script}:
We first preprocess the input VCF files (predicted and ground truth) via
cleaning and left normalization and then compute the counts of true positives,
false positives (where applicable), and false negatives using the ambiguity
resolution algorithms described in 
Section~\ref{ssec:ambiguity} and Section~\ref{sec:ambiguity_details}.  
Given these counts, we use Proposition~\ref{prop:real_error} 
to obtain error bounds on recall and
precision. Moreover, while evaluating SVs, if an SV has multiple alternate
alleles, i.e., multiple possible lengths, we score it as correct if any
alternate allele yields the correct length within the error tolerance for
length.  Within each of the two categories of SVs, insertions and deletions, we
score each validated variant against the closest predicted variant. 

\textbf{Known false positives}:
In our sampled human SNP data, in addition to sites known to be polymorphic in
the sample, there are alleles which through validation are known to be false
positives, due to sequencing or alignment artifacts rather than underlying
genomic variation.  Counts of these sites are tabulated to compute estimates
of precision for SNP calling (though not unbiased ones). 

\section{Additional Experiments}
We now present additional benchmarking results.  Tables~\ref{tab:indel_del_all} and
\ref{tab:indel_ins} show results on small insertions and deletions, while
Tables~\ref{tab:sv_ins} and \ref{tab:sv_ins_0bp} provide results for long insertions. It should be noted that
the long insertions results are fairly poor across the board; for the mouse and 
contaminated mouse results, we see very large margins of error because both callers
predicted very few long insertions relative to the number of erroneous variants
we expect to find in our validated truth data given the upper-bounded rate of
error.  Moreover, Table~\ref{tab:perf_stats} presents additional computational
performance results to supplement the results in Table~\ref{tab:venter_stats} in the main text.

\bibliographystyleappendix{plainnat}
\bibliographyappendix{benchmarking_refs}
\clearpage
\begin{landscape}
\begin{table}
  \centering
  \caption{Benchmarking results for small deletion (including Pindel results).}
  \label{tab:indel_del_all}
  \begin{tabular}[r]{r|rrrr||rrrr||rrrr}
   dataset & \multicolumn{4}{c||}{mpileup} & \multicolumn{4}{c||}{GATK} & \multicolumn{4}{c}{Pindel} \\
  \hline
     & \multicolumn{1}{c}{Hours} & \multicolumn{1}{c}{Cost} & \multicolumn{1}{c}{Pre} & \multicolumn{1}{c||}{Rec} &\multicolumn{1}{c}{Hours} & \multicolumn{1}{c}{Cost} & \multicolumn{1}{c}{Pre} & \multicolumn{1}{c||}{Rec} & \multicolumn{1}{c}{Hours} & \multicolumn{1}{c}{Cost} & \multicolumn{1}{c}{Pre} & \multicolumn{1}{c}{Rec}\\
  \hline
Venter & $2 h$ & $\$ 5$ & $91.3\% \pm 0.0$ &  $75.2\% \pm 0.0$ & $57 h$ & $\$ 142$ & $92.3\% \pm 0.0$ & $90.9\% \pm 0.0$& $80 h$ & $\$ 200$ &$94.0\% \pm 0.0$ & $79.8\% \pm 0.0$ \\
contam. Venter & $3 h$ & $\$ 8$ & $91.7\% \pm 0.0$ &  $71.7\% \pm 0.0$ & $75 h$ & $\$188$ & $92.4\% \pm 0.0$ & $90.5\% \pm 0.0$ & \tooslow & \toomuch & - & -  \\
mouse & $6 h$ & $\$15$ & $79.0\% \pm 0.4$ &  $85.9\% \pm 0.4$ & $107 h$ & $\$268$ & $81.5\% \pm 0.4$ & $95.8\% \pm 0.4$ & \tooslow & \toomuch & - & -  \\
contam. mouse & $5 h$ & $\$13$ & $80.4\% \pm 0.4$ &  $84.9\% \pm 0.4$ & $96 h$ & $\$240$ & $82.8\% \pm 0.4$ & $95.6\% \pm 0.4$& $345 h$ & $\$ 862$ &$85.9\% \pm 0.4$ & $81.3\% \pm 0.4$ \\ \\\end{tabular}
\end{table}

\begin{table}
  \centering
  \caption{Benchmarking results for small insertions.}
  \label{tab:indel_ins}
  \begin{tabular}[r]{r|rrrr||rrrr||rrrr}
   dataset & \multicolumn{4}{c||}{mpileup} & \multicolumn{4}{c||}{GATK} & \multicolumn{4}{c}{Pindel} \\
  \hline
     & \multicolumn{1}{c}{Hours} & \multicolumn{1}{c}{Cost} & \multicolumn{1}{c}{Pre} & \multicolumn{1}{c||}{Rec} &\multicolumn{1}{c}{Hours} & \multicolumn{1}{c}{Cost} & \multicolumn{1}{c}{Pre} & \multicolumn{1}{c||}{Rec} & \multicolumn{1}{c}{Hours} & \multicolumn{1}{c}{Cost} & \multicolumn{1}{c}{Pre} & \multicolumn{1}{c}{Rec}\\
  \hline
Venter & $2 h$ & $\$ 5$ & $85.8\% \pm 0.0$ &  $70.9\% \pm 0.0$ & $57 h$ & $\$ 142$ & $90.6\% \pm 0.0$ & $88.1\% \pm 0.0$& $80 h$ & $\$ 200$ &$92.5\% \pm 0.0$ & $72.9\% \pm 0.0$ \\
contam. Venter & $3 h$ & $\$ 8$ & $87.8\% \pm 0.0$ &  $68.7\% \pm 0.0$ & $75 h$ & $\$188$ & $90.9\% \pm 0.0$ & $87.5\% \pm 0.0$ & \tooslow & \toomuch & - & -  \\
mouse & $6 h$ & $\$15$ & $87.8\% \pm 0.4$ &  $76.6\% \pm 0.4$ & $107 h$ & $\$268$ & $91.0\% \pm 0.4$ & $91.5\% \pm 0.4$ & \tooslow & \toomuch & - & -  \\
contam. mouse & $5 h$ & $\$13$ & $88.6\% \pm 0.5$ &  $75.8\% \pm 0.4$ & $96 h$ & $\$240$ & $91.1\% \pm 0.4$ & $90.9\% \pm 0.4$& $345 h$ & $\$ 862$ &$92.4\% \pm 0.5$ & $71.7\% \pm 0.4$ \\\end{tabular}
\end{table}

\begin{table}
  \centering
  \caption{Benchmarking results for long insertions (approximate evaluation with breakpoint tolerance of 100bp). }
  \label{tab:sv_ins}
  \begin{tabular}[r]{r|rrrr||rrrr}
   dataset & \multicolumn{4}{c||}{Pindel} & \multicolumn{4}{c}{BreakDancer} \\
  \hline
     & \multicolumn{1}{c}{Hours} & \multicolumn{1}{c}{Cost} & \multicolumn{1}{c}{Pre} & \multicolumn{1}{c||}{Rec} & \multicolumn{1}{c}{Hours} & \multicolumn{1}{c}{Cost} & \multicolumn{1}{c}{Pre} & \multicolumn{1}{c}{Rec}\\
  \hline
Venter & $80 h$ & $\$ 200$ & $24.8\% \pm 0.0$ & $10.2\% \pm 0.0$& $2 h$ & $\$ 5$ & $4.2\% \pm 0.0$ & $0.0\% \pm 0.0$ \\
contam. Venter & \tooslow & \toomuch & - & - & $1 h$ & $\$ 2$ & $0.0\% \pm 0.0$ & $0.0\% \pm 0.0$ \\
NA12878 & $81 h$ & $\$203$ &  -  & $0.0\% \pm 3.0$& $3 h$ & $\$7$ &  -  & $0.0\% \pm 3.0$ \\
contam. NA12878 & \tooslow & \toomuch & - & - & \tooslow & \toomuch & - & -  \\
NA18507 & $168 h$ & $\$420$ &  -  & $0.0\% \pm 5.0$& $3 h$ & $\$7$ &  -  & $0.0\% \pm 5.0$ \\
NA19240 & $221 h$ & $\$ 552$ &  -  & $0.0\% \pm 2.5$& $29 h$ & $\$73$ &  -  & $0.0\% \pm 2.5$ \\
mouse & \tooslow & \toomuch & - & - & $4 h$ & $\$ 10$ & $1.8\% \pm 35.7$ & $0.0\% \pm 0.5$ \\
contam. mouse & $345 h$ & $\$ 862$ & $8.2\% \pm 82.7$ & $1.3\% \pm 0.5$& $3 h$ & $\$7$ & $0.0\% \pm 0.0$ & $0.0\% \pm 0.5$ \\ 
\end{tabular}
\end{table}

\begin{table}
  \caption{Benchmarking results for long insertions (exact evaluation). }
    \centering
  \label{tab:sv_ins_0bp}
  \begin{tabular}[r]{r|rrrr||rrrr}
   dataset & \multicolumn{4}{c||}{Pindel} & \multicolumn{4}{c}{Breakdancer} \\
  \hline
     & \multicolumn{1}{c}{Hours} & \multicolumn{1}{c}{Cost} & \multicolumn{1}{c}{Pre} & \multicolumn{1}{c||}{Rec} & \multicolumn{1}{c}{Hours} & \multicolumn{1}{c}{Cost} & \multicolumn{1}{c}{Pre} & \multicolumn{1}{c}{Rec}\\
  \hline
Venter & $80 h$ & $\$ 200$ & $0.0\% \pm 0.0$ & $9.7\% \pm 0.0$& $2 h$ & $\$ 5$ & $0.0\% \pm 0.0$ & $0.0\% \pm 0.0$ \\
contaminated Venter & \tooslow & \toomuch & - & - & $1 h$ & $\$ 2$ & $0.0\% \pm 0.0$ & $0.0\% \pm 0.0$ \\
NA12878 & $81 h$ & $\$203$ &  -  & $0.0\% \pm 3.0$& $3 h$ & $\$7$ &  -  & $0.0\% \pm 3.0$ \\
contaminated NA12878 & \tooslow & \toomuch & - & - & \tooslow & \toomuch & - & -  \\
NA18507 & $168 h$ & $\$420$ &  -  & $0.0\% \pm 5.0$& $3 h$ & $\$7$ &  -  & $0.0\% \pm 5.0$ \\
NA19240 & $221 h$ & $\$ 552$ &  -  & $0.0\% \pm 2.5$& $29 h$ & $\$73$ &  -  & $0.0\% \pm 2.5$ \\
mouse & \tooslow & \toomuch & - & - & $4 h$ & $\$ 10$ & $0.0\% \pm 35.7$ & $0.0\% \pm 0.5$ \\
contaminated mouse & $345 h$ & $\$ 862$ & $0.0\% \pm 82.7$ & $1.3\% \pm 0.5$& $3 h$ & $\$7$ & $0.0\% \pm 0.0$ & $0.0\% \pm 0.5$ \\\end{tabular}
\end{table}
\end{landscape}

\begin{table*}
  \caption{Detailed performance statistics for all datasets and algorithms.}
  \centering
  \label{tab:perf_stats}
  \begin{tabular}[r]{r|r||rrrrrrr}
  dataset & caller & Clock Time & Cost & CPU Time & Max Threads & Max disk & Max memory & Avg. memory \\
  \hline
Venter & mpileup & $2 h$ & $\$5$ & $28 h$ & $24$ & $0.9$ GB & $59.1$ GB & $58.8$ GB \\
Venter & GATK & $57 h$ & $\$143$ & $179 h$ & $16$ & $530$ GB & $59.1$ GB & $53.4$ GB \\
Venter & Pindel & $80 h$ & $\$200$ & $80 h$ & $1$ & $3.4$ GB & $59.1$ GB & $58.7$ GB \\
Venter & BreakDancer & $2 h$ & $\$5$ & $2 h $ & $1$ & $0.02$ GB & $59.1$ GB & $58.7$ GB \\
\hline
contam. Venter & mpileup & $3 h$ & $\$8$ & $28$ & $24$ & $1.1$ GB & $59.1$ GB & $57.5$ GB \\
contam. Venter & GATK & $75 h$ & $\$188$ & $152 h$ & $16$ & $536$ GB & $59.1$ GB & $52.8$ GB \\
contam. Venter & Pindel & - & - & - & - & - & - & - \\
\hline
NA12878 & mpileup & $5 h$ & $\$13$ & $52 h$ & $24$ & $1.2$ GB & $59.1$ GB & $56.2$ GB \\
NA12878 & GATK & $86 h$ & $\$215$ & $174 h$ & $16$ & $852$ GB & $59.1$ GB & $55.0$ GB \\
NA12878 & Pindel & $81 h$ & $\$203$ & $78 h$ & $1$ & $4.4$ GB & $59.1$ GB & $58.4$ GB\\
NA12878 & BreakDancer & $3 h$ & $\$7$ & $3 h$ & $1$ & $0.05$ GB & $59.1$ GB & $58.6$ GB \\
\hline
contam. NA12878 & mpileup & $5 h$ & $\$13$ & $54 h$ & $24$ & $1.6$ GB & $59.1$ GB & $57.1$ GB\\
contam. NA12878 & GATK & $110 h$ & $\$275$ & $242 h$ & $16$ & $878$ GB & $59.1$ GB & $56.6$ GB\\
contam. NA12878 & Pindel & - & - & - & - & - & - & - \\
contam. NA12878 & BreakDancer & - & - & - & - & - & - & - \\
\hline
NA18507 & mpileup & $4 h$ & $\$10$ & $43 h$ & $24$ & $1.8$ GB & $59.1$ GB & $57.4$ GB \\
NA18507 & GATK & $154 h$ & $\$385$ & $234 h$ & $16$ & $710$ GB & $59.1$ GB & $41.9$ GB \\
NA18507 & Pindel & $168 h$ & $\$420$ & $167 h$ & $1$ & $4.4$ GB & $59.1$ GB & $58.8$ GB\\
NA18507 & BreakDancer & $3 h$ & $\$7$ & $3 h$ & $1$ & $0.08$ GB & $59.1$ GB & $58.4$ GB\\
\hline
NA19240 & mpileup & $4 h$ & $\$10$ & $46 h$ & $24$ & $1.4$ GB & $59.1$ GB & $57.0$ GB\\
NA19240 & GATK & $167 h$ & $\$418$ & $317 h$ & $16$ & $335$ GB & $59.1$ GB & $50.9$ GB \\
NA19240 & BreakDancer & $29 h$ & $\$73$ & $29 h$ & $1$ & $0.09$ GB & $59.1$ GB & $58.9$ GB \\
\hline
mouse & mpileup & $6 h$ & $\$15$ & $71 h$ & $24$ & $1.4$ GB & $59.1$ GB & $59.0$ GB \\
mouse & GATK & $107 h$ & $\$268$ & $434 h$ & $16$ & $839$ GB & $59.1$ GB & $56.9$ GB\\
mouse & Pindel & - & - & - & - & - & - & - \\
mouse & BreakDancer & $4 h$ & $\$10$ & $4 h$ & $1$ & $0.1$ GB & $59.1$ GB & $56.7$ GB \\
\hline
contam. mouse & mpileup & $5 h$ & $\$13$ & $60 h$ & $24$ & $0.7$ GB & $59.1$ GB & $56.2$ GB \\
contam. mouse & GATK & $96 h$ & $\$240$ & $351 h$ & $16$ & $786$ GB & $59.1$ GB & $56.1$ GB\\
contam. mouse & BreakDancer & $3 h$ & $\$7$ & $3 h$ & $1$ & $0.08$ GB & $59.1$ GB & $56.5$ GB \\
  \end{tabular}
\end{table*}

\end{document}